\def\Journal#1#2#3#4{{#4} {\it #1} {\bf #2}, #3 }
\def\ud{\textrm{d}}
\newcommand{\smfrac}[2]{{\textstyle{#1\over#2}}}
\def\half{\smfrac{1}{2}}
\newcommand{\w}[1]{\bm{#1}} 
\def\h{h}
\def\ud{\textrm{d}}
\def\u{\dot{u}}
\def\U{\dot{U}}
\def\modU{\textrm{U}}
\def\B{\mathfrak{X}} 
\def\E{\mathfrak{E}} 
\def\Q{\mathfrak{Q}} 
\def\R{\mathfrak{R}} 
\def\b{\mathfrak{b}} 
\def\O{\mathfrak{o}} 
\def\N{\mathfrak{N}} 
\def\J{\mathfrak{J}} 
\def\X{\w{X}} 
\def\Y{\w{Y}} 
\def\Z{\w{Z}} 
\def\tfrac{\smfrac}
\newcommand{\HH}{\mathcal{H}}
\newtheorem*{de}{Definition}
\newtheorem{re}{Remark}
\newtheorem{lm}{Lemma}
\newtheorem{te}{Theorem}
\newtheorem{co}{Corollary}
\begin{document}

\title[Shear-free perfect fluids]{Shear-free perfect fluids with a barotropic equation of state in general relativity: the present status
}
\author{Norbert Van den Bergh$^1$ and Radu Slobodeanu$^{2,3}$}
\address{$^1$\ Faculty of Engineering and Architecture FEA16, Gent University, Galglaan 2, 9000 Gent, Belgium}
\address{$^2$ \ Faculty of Physics, University of Bucharest, P.O. Box Mg-11, RO-077125 Bucharest-Magurele, Romania}
\address{$^3$ \ Institute of Mathematics, University of Neuch\^atel, 11 rue Emile Argand, 2000 Neuch\^atel, Switzerland}
\eads{\mailto{norbert.vandenbergh@gmail.com}, \mailto{radualexandru.slobodeanu@g.unibuc.ro}}

\begin{abstract}
The present status of the shear-free fluid conjecture in general relativity is discussed: a review is given of recent partial proofs, a new and complete
proof is given for the case of a linear equation of state, \emph{including a non-zero cosmological constant}, and a number of useful results are presented
which might help in proving the conjecture for a general equation of state.
\end{abstract}

\pacs{04.20.Jb, 04.40.Nr}

\section{Introduction}
We consider perfect fluid solutions of the Einstein field equations,
\begin{equation}
R_{ab} - \tfrac{1}{2} R \, g_{ab} = T_{ab}
, \label{EFE}
\end{equation}
on a 4-dimensional spacetime $(M, g)$, with energy-momentum tensor given by
\begin{equation}\label{intro_Tab}
T_{ab}= (\mu+p) u_{a}u_{b}+p g_{ab},
\end{equation}
$\mu$ and $p$ being respectively the energy density and pressure of the fluid and the unit time-like vector field $u_a$ being the fluid's (covariant) 4-velocity. As is well known, 
the covariant derivative of $u_a$ can be decomposed as
\begin{equation}\label{uadecomp}
u_{a;b}=\tfrac{1}{3}\theta (g_{ab}+u_a u_b)+\sigma_{ab}+\omega_{ab}-\dot{u}_{a}u_{b},
\label{eq2}
\end{equation}
where $\theta$ is the fluid's (rate of volume) expansion, $\dot{u}_{a}$ is the acceleration and $\sigma_{ab}$, $\omega_{ab}$ are respectively the
shear and vorticity tensors, which are uniquely defined by (\ref{eq2}) and the properties
\begin{equation}
 u^a \dot{u}_{a} = 0,\ u^a\omega_{ab} = \ u^a\sigma_{ab} =0, \ \sigma_{[ab]}=\omega_{(ab)}=0, \ {\sigma^a}_a=0.
\end{equation}
The physical significance of these so called \emph{kinematical quantities} has been discussed by many authors, see for example \cite{Ellis1971}. Among
well known explicit solutions of the Einstein field equations \cite{Kramer}, in which some of these quantities vanish, we note the following \textit{shear-free} ($\sigma_{ab}=0$) solutions: the Einstein static 
universe ($\theta=\dot u_a=\sigma_{ab}=\omega_{ab}=0$), FLRW universes ($\theta \neq 0$, $\dot u_a=\sigma_{ab}=\omega_{ab}=0$) and the G\"odel universe ($\theta = \dot u_a = \sigma_{ab}=0$, $\omega_{ab} \neq 0$). 
Imposing a \textit{barotropic equation of state} $p = p(\mu)$, a common feature of the above examples, leads to extra restrictions on the solution space: for example all barotropic and shear-free ($\sigma_{ab}=0$) perfect fluids with non-vanishing expansion and vanishing vorticity are known explicitly \cite{CollinsKV}. This is not the case for barotropic and shear-free perfect fluids with vanishing expansion and non-vanishing vorticity \cite{Karimian2012}, although also here large classes of solutions 
exist (for example all rigidly rotating axisymmetric and stationary perfect fluids belong to this family). Remarkably, barotropic and shear-free perfect fluids in which both expansion and vorticity are non-zero seem to be confined to the limiting situation of `$\Lambda$-type' models (meaning $p=-\mu=constant$), the only example known to us being a Bianchi IX model found by Obukhov et al.~\cite{Obukhov}. This  brings us to the subject of the present paper, the so-called \emph{shear-free fluid conjecture} which claims that
\begin{quote}
general relativistic, shear-free perfect fluids obeying a barotropic equation of state such that $p + \mu \neq  0$, are either non-expanding or non-rotating.
\end{quote}
During the last few years there has been a renewed interest in this conjecture, which, if true, would be a remarkable consequence of the full Einstein
field equations: on the one hand Newtonian perfect fluids with a barotropic equation of state, which are rotating and expanding but non-shearing, are known
to exist \cite{Ellis2011,HeckmannSchucking, Narlikar,SenSopSze}, while on the other hand
in, for example, $f(R)$ gravity there is no counterpart of the conjecture either \cite{Sofuoglu}.

The first suggestion that the vanishing of shear
could play a decisively restricting role in the construction of expanding and rotating perfect fluids appeared in 1950, without proof, in a somewhat obscure contribution by
G\"odel \cite{Godel2} on homogeneous rotating cosmological models. A precise formulation of G\"odel's claim was given in 1957 by Sch\"ucking \cite{Schucking}, who gave a short coordinate based
proof that spatially homogeneous dust models ($p=0$) could be either rotating or expanding, but not both.  The condition of vanishing pressure was dropped by
Banerji \cite{Banerji}, who gave a similar coordinate based proof for (tilted) spatially homogeneous perfect fluids obeying a $\gamma$-\textit{law equation of state},
$p=(\gamma-1)\mu$, with $\gamma - 1 \neq \frac{1}{9}$ \footnote{the reason why $\gamma - 1 = \frac{1}{9}$ is special was clarified in \cite{NorbertCQG1999}, where also a proof was given for non-spatially homogeneous spacetimes}. Sch\"ucking's result was generalized in 1967 by Ellis \cite{Ellis1}, who used
the orthonormal tetrad formalism to show that the restriction of spatial homogeneity was redundant for dust spacetimes (in \cite{WhiteCollins} it
was observed that Ellis'
result remained valid in the presence of a cosmological constant).
In \cite{TreciokasEllis} Treciokas and Ellis proved, again using a combination of an orthonormal tetrad formalism and an adapted choice of coordinates,
that the conjecture held true also for the equation of state $p=\frac{1}{3} \mu$, a result which was generalised by Coley \cite{Coley} to allow for a possible non-zero
cosmological constant. In \cite{TreciokasEllis} an outline of an argument was presented, indicating the validity of the conjecture for perfect fluids
in which the \textit{acceleration potential} $r = \exp \int_{p_0}^p \frac{1}{\mu+p} \ud p$ satisfies an equation of the form $\dot{r}= \beta(r)$, where the \textit{dot-operator} is the derivative along the fluid 4-velocity. This result
(which implies the validity of the
conjecture for a general equation of state, once one additionally assumes spatial homogeneity, as was the case in \cite{Banerji, KingEllis, Whiteth}) will play a key role in the sequel.
However the details of the underlying proof remained veiled until 1988, when Lang and Collins \cite{Lang,Langth} explicitly showed that $\omega \theta=0$ indeed follows,
provided there exists a functional relation of the form $\theta=\theta(\mu)$ (which, by the conservation law $\dot{\mu} + (\mu+p) \theta=0$, is equivalent with $\dot{r}=\beta (r)$). 
A `covariant' proof
of this same result was given by Sopuerta in \cite{Sopuerta1998}.
While Treciokas and Ellis already questioned the possible existence of rotating and expanding perfect fluids with
$p=p(\mu)$, their non-existence was explicitly conjectured by Collins \cite{CollinsKV}, following a series of papers in which the conjecture was proved successively for the cases where
the vorticity vector is parallel to the acceleration (see \cite{WhiteCollins}, or \cite{SenSopSze} for a fully covariant proof), or in which the Weyl tensor is purely magnetic \cite{Collins1984} or purely electric \cite{Langth, CyganowskiCarminati}.

Since then the conjecture has been proved also in a large number of special cases, such as
$\ud p/\ud \mu = -\frac{1}{3}$ \cite{CyganowskiCarminati, Langth, slob, NorbertCQG1999}; $\theta=\theta(\omega)$ \cite{Sopuerta1998};
Petrov types N \cite{Carminati1990} and III \cite{CarminatiCyganowski1996,CarminatiCyganowski1997}; the existence of a conformal Killing vector parallel to the fluid flow \cite{Coley}; the Weyl tensor having either a divergence-free electric part \cite{NorbertKarimianCarminatiHuf2012}, or a divergence-free magnetic
part, in combination with an equation of state which is of the $\gamma$-law type \cite{NorbertCarminatiKarimian2007} or which is sufficiently
generic \cite{CarminatiKarimianNorbertVu2009}, and in the case where the Einstein field equations are linearised
about a FLRW background \cite{Nzioki} . 
A major step has been achieved recently by the second author \cite{Slobodeanu2014} proving the conjecture for an arbitrary $\gamma$-law equation of state
(except for the cases $\gamma-1 = -\smfrac{1}{5}, -\smfrac{1}{6},-\smfrac{1}{11},-\smfrac{1}{21},\smfrac{1}{15}, \smfrac{1}{4}$) and a vanishing cosmological constant. In this approach, reminiscent of Pantilie's classification result on
Einstein manifolds \cite{Pantilie2002}, the Einstein field equations were seen as a second order differential system in the length scale function,
with the integrability conditions for this system allowing one
to prove the conjecture \emph{via} some sufficient conditions in terms of \emph{basic functions}, i.e.~functions
that are constant along the fluid flow.\\
Finally, in a recent paper by Carminati~\cite{Carminati2015}, an attempt of a proof was given for a linear equation of state and vanishing
cosmological constant. However this proof is invalid, as inappropriate use was made of Maple's \texttt{solve} command, which for parametric polynomial systems only returns generic
solutions\footnote{namely solutions valid in an open set of the parameter space, i.e.~\texttt{solve(a*x = 0,x)} will only return
\texttt{x = 0}; a bug in Maple's  \texttt{solve} code (Maple support, private communication) also prevents the issuing
of a warning message that solutions might have been lost, even if the \texttt{parametric = full} option is used.}.
Furthermore, the set of equations used in \cite{Carminati2015}
was under-determined, a fact made obvious by inspection of the special case in which there is a Killing vector along the
vorticity, leading to the simplifications $\textrm{u3 = f3 = T13 = T23 = g3 = m3 = n = 0}$.\\

In the present paper we will complete the proof of \cite{Slobodeanu2014}, covering the exceptional values of $\gamma$ and allowing also for a non-zero cosmological
constant, or, equivalently, generalising the equation of state to the form $ p=(\gamma -1) \mu + p_0$.
Inclusion of the constant term is important, first of all as the analysis of the conjecture for a general
equation of state in a natural way is split into two branches, either $p''=\ud^2 p / \ud \mu^2 =0$, leading to $ p=(\gamma -1) \mu + p_0$, or to $p'' \neq 0$, with the analysis
of the second case heavily leaning on the former. A second justification for including the $p_0$ term is that the dimension of a solution space of a set of exact solutions of the Einstein field equations, obtained by imposing kinematic restrictions, may change drastically by the inclusion of
a non-zero cosmological constant. A typical example is provided by the Petrov type I \emph{silent universes}, for which the orthogonal spatially homogeneous Bianchi type I metrics most likely \cite{Sopuerta1997} are the only admissible metrics when $\Lambda=0$, but which for $\Lambda >0$ have been shown \cite{NorbertWylleman2004} to contain a peculiar set of non-OSH models.\\
In addition we generalize the formalism of \cite{Slobodeanu2014} to a general equation of state and we present some theorems, which not only will play a key role in the present proof for a linear equation of state, but which likely will also be useful when tackling
the conjecture in its full generality, when $p=p(\mu)$ is an arbitrary function ($p \neq-\mu$) of the matter density. These
theorems tell us that the conjecture is valid
provided certain algebraic restrictions are obeyed by the kinematical quantities, or that, if the conjecture does \emph{not} hold, there exists a Killing vector along the vorticity. In the latter
case the equations describing the problem simplify dramatically, but the accompanying loss of information turns this sub-case, as remarked already by
Collins \cite{CollinsKV}, into an exceptionally elusive one. The simplest of these
criteria (Corollary \ref{co1}) says that, for an expanding and rotating perfect fluid obeying a barotropic equation of state, the existence of a Killing vector along the vorticity is equivalent with
the acceleration being orthogonal with the vorticity.\\ 
We begin with introducing in section 2 the necessary notations and conventions, while in section 3 we
make the link with the formalism used in \cite{Slobodeanu2014} and present the governing equations for the case of an arbitrary equation of state. In section 4 we prove the general theorems mentioned above. In section 5 we prove the conjecture for the case of a linear equation of state, by splitting the argument according to whether the acceleration is orthogonal to vorticity or not, in 
Theorems \ref{Theorem4} and \ref{Theorem5}. The last sections are dedicated to conclusions and technical Appendices.

\section{Notations and fundamental equations}

We introduce at each point of spacetime an orthonormal tetrad $(\w{e}_a)=(\w{e}_0, \w{e}_\alpha)$ with 
the time-like unit vector $\w{e}_0$ 
coinciding with the fluid 4-velocity $\w{u}$ (henceforth 
Latin indices are tetrad indices taking the values 0,1,2,3, while Greek indices are spatial triad indices taking the values 1, 2, 3). 
Boldface symbols always 
will refer to vector (tensor) fields, but for readability (and as is customary in the literature, see e.g.~\cite{EllisMaartensmacCallum2012}) we 
will also write
$\w{e}_a=\partial_a$: for example $\w{u}=\partial_0$, $\dot{\w{u}}=\dot{u}^\alpha \partial_\alpha$, $\dot{\w{u}}^2 = \dot{u}_\alpha \dot{u}^\alpha$ etc. ... \\
The volume 4-form components will be denoted by $\eta_{abcd}$ with the convention $\eta_{0123}=-1$; its restriction to tangent hyperplanes orthogonal 
to $\w{u}$ is $\varepsilon_{\alpha\beta\gamma}$. 
To a space-like 2-form one associates a vector field by Hodge duality, e.g.~the vorticity vector
$\w{\omega}$ has components $\omega_\alpha = \tfrac{1}{2}\varepsilon_{\alpha\beta\gamma}\omega^{\beta\gamma}$.  The notation $\omega$ will stand for the norm of the vorticity vector / 2-form.

To fix the sign conventions let us point out that the metric components are $(g_{ab})=\mathrm{diag}(-1,1,1,1)$ and that the Riemann and Ricci curvature tensors respectively satisfy
\begin{equation}\label{ricciv}
{v^a}_{;d;c}-{v^a}_{;c;d} = {R^a}_{bcd}v^b \ , \quad  R_{ab}={R^m}_{amb},
\end{equation}
while the 'trace-free part' of the curvature, given by the Weyl tensor, is
\begin{equation}\label{Weyldef}
C_{abcd}=R_{abcd} - (g_{a[c}R_{d]b}+g_{b[d}R_{c]a})+\tfrac{1}{3}R \, g_{a[c}g_{d]b}.
\end{equation}

\paragraph{An extended tetrad formalism.} We will use the extended orthonormal tetrad formalism \cite{EllisMaartensmacCallum2012, Norbert2013}, in which the 
\textit{main variables} are
\begin{itemize}
\item the tetrad basis vectors $\partial_a$, 

\item the kinematical quantities $\dot{u}_\alpha$, $\omega_\alpha$, $\theta$, $\sigma_{\alpha \beta}$, 

\item the local angular velocity $\Omega_\alpha$ of the triad $\partial_{\alpha}$ with respect to a set of Fermi-propagated axes and the Kundt-Sch\"ucking-Behr variables \cite{MacCallum1971} $a_{\alpha}$ and $n_{\alpha \beta}=n_{\beta \alpha}$ which parametrize the purely spatial commutation coefficients ${{\gamma}^\alpha}_{\beta\kappa}$. They are defined by the relations (see \cite{vanElstPHD} for more explicit formulae)
\begin{eqnarray}\label{comm_explicit}
{}[\partial_0,\partial_\alpha] &=& \dot{u}_\alpha \partial_0 - \left(\smfrac{1}{3} \theta\delta_\alpha^{\beta}+\sigma_\alpha^{\beta}
+{\varepsilon^\beta}_{\alpha\gamma}(\omega^\gamma+\Omega^\gamma)\right) \partial_\beta~, \\
{}[\partial_\alpha, \partial_\beta ] &=& {{\gamma}^c}_{\alpha\beta}\partial_c\equiv -2\varepsilon_{\alpha\beta\gamma}\omega^\gamma\partial_0 + \left(2a_{[\alpha}\delta^\gamma_{\beta]}+
\varepsilon_{\alpha\beta\delta} n^{\delta\gamma}\right)\partial_\gamma~.\label{comm_explicit_bis}
\end{eqnarray}
Sometimes, it is computationally advantageous to replace $a_{\alpha}$ and $n_{\alpha\beta}$ $(\alpha\neq\beta)$ with new variables $q_{\alpha}$ and $r_{\alpha}$ defined by 
\begin{equation*}
n_{\alpha-1 \,\alpha+1}=(r_{\alpha}+q_{\alpha})/2, \quad a_{\alpha}=(r_{\alpha
}-q_{\alpha})/2.
\end{equation*}

\item the energy density $\mu$ and pressure $p$, 

\item the `electric' and `magnetic' parts $E_{\alpha \beta}$, $H_{\alpha \beta}$ of the Weyl tensor with respect to $\w{u}$:
\begin{equation}
 E_{ab}= C_{acbd}u^c u^d , \quad 
 H_{ab}= \frac{1}{2} \eta_{amcd} {C^{cd}}_{bn}u^m u^n. \label{EHdef}
\end{equation}
They are symmetric trace-free tensors that determine the Weyl curvature. 
\end{itemize}
In addition, we shall use the following \textit{auxiliary variables}: the spatial gradient of the expansion scalar, $z_\alpha=\partial_\alpha 
\theta$, and the (covariant) divergence of the acceleration, $ j\equiv {\dot{u}^a}_{;a} =
\partial_{\alpha}\dot{u}^{\alpha}+\dot{u}^{\alpha}\dot{u}_{\alpha}-2 \dot{u}^{\alpha} a_{\alpha}$.\\

Note that with this choice of variables, once we assume that the Einstein equations (\ref{EFE}) are satisfied, the Riemann tensor is 
actually \emph{defined} in terms of $\w{E},\w{H},p$ and $\mu$ via (\ref{Weyldef}, \ref{EHdef}) \footnote{for example $R_{1212}=\smfrac{1}{3} 
\mu - E_{33}$. See \cite{vanElstPHD} for a full list of such relations.}, with the symmetry 
and trace-free properties of $\w{E}$ and $\w{H}$ guaranteeing the usual symmetry properties of a curvature tensor. 
The usual defining formulae (obtained from the second Cartan structure equations or, equivalently, from (\ref{ricciv})),
\begin{equation}\label{cartan2bis}
{R^a}_{bcd}={\Gamma ^a}_{bd,c}-{\Gamma ^a}_{bc,d}+{\Gamma ^e}_{bd}
{\Gamma ^a}_{ec} - {\Gamma ^e}_{bc} {\Gamma ^a}_{ed} - {{\gamma}^e}_{cd} {\Gamma ^a}_{be}~,
\end{equation}
become then a set of first order partial differential \emph{equations} in the connection coefficients
${\Gamma ^c}_{ab}$, which are related to the main variables of the formalism through the commutation coefficients: 
\begin{equation}\label{commdef2}
\Gamma_{\ ab}^c = \tfrac{1}{2}\left(\gamma_{\ ba}^c + \gamma_{\ cb}^a - \gamma_{\ ac}^b \right).
\end{equation}

This set of equations (\ref{cartan2bis}) is automatically satisfied \cite{Norbert2013} if we take as \textit{governing equations} of the formalism 
the following system:
\begin{enumerate}[i)]
\item Einstein field equations (\ref{EFE}),
\item the Jacobi equations $\left[\partial_{[a},\left[\partial_b,\partial_{c]}\right]\right]=0$, or
\begin{equation}\label{Jacobibis}
\partial_{[a}{{\gamma}^d}_{bc]}-{{\gamma}^d}_{e[a} {{\gamma}^e}_{bc]} =0~,
\end{equation}
\item 18\footnote{3 of which are identities under the Jacobi equations} Ricci equations 
${u^a}_{;d;c}-{u^a}_{;c;d}= {R^a}_{0cd}$ and
\item 20 Bianchi equations $R^a{}_{b[cd;e]} = 0$,
\end{enumerate}
where the $R_{ab}$ components in $(i)$ are replaced, via (\ref{cartan2bis}), in terms of 
commutation coefficients ${{\gamma}^a}_{bc}$ and their derivatives. This system of equations contains a large number of redundancies (e.g. the field equations follow as integrability conditions for the Bianchi equations) and is integrable. For a detailed discussion see \cite{Norbert2013} where the equations have been written out in detail.

\paragraph{Tetrad fixing conventions.} It has become customary \cite{WhiteCollins} to
align $\partial_{3}$ with $\w{\omega}$, such that $\w{\omega} =\omega\partial_{3}\neq 0$.
Applying the commutators $[\partial_3, \partial_\alpha]$ to $p$ and using the Euler and Jacobi equations one can show that the spatial triad can be taken to be co-rotating: $\w{\Omega}+\w{\omega}=0$, with
the remaining tetrad freedom consisting of rotations in the $(1,2)$ plane, $\partial_1+ i\,\partial_2 \to e^{i\alpha} (\partial_1+i\,\partial_2)$ satisfying $\partial_0\alpha =0$.\\
In accordance with the definition of basic variables (see section 3), we will call such transformations \emph{basic rotations}. Notice that, under $\partial_1+i\,\partial_2 \to e^{i\alpha} (\partial_1+ i\,\partial_2)$,
\begin{equation*}
\tfrac{1}{2}(n_{11}-n_{22})+ i \, n_{12} \longrightarrow e^{2 i \alpha} \left(\tfrac{1}{2}(n_{11}-n_{22})+ i \, n_{12}\right),
 \end{equation*}
while under $\sigma_{ab}=0$ and $\w{\Omega}+\w{\omega}=0$ the evolution equations for $n_{11} -n_{22}$ and $n_{12}$ are identical: it therefore follows that one
can specialize the tetrad by means of a basic rotation so as to
achieve $n_{11}=n_{22}\equiv n$. This fixes the tetrad, unless
\begin{equation}\label{extra_rot}
n_{12}=n_{11}-n_{22}=0,
\end{equation}
in which case further basic rotations can (and will) be used to obtain extra simplifications.

\paragraph{Conventions related to the equation of state.} Throughout the paper we assume $p=p(\mu)$ with $p+\mu \neq 0$. We adopt the notations: 
$p' =\ud p / \ud \mu$,  $G\equiv  \frac{p''}{p'}(p+\mu) -p'+\frac{1}{3}$, $G' =\ud G / \ud \mu$ and $G_p=G'/p'$.

\noindent Although the assumption $p+\mu\neq 0$ appears throughout
the literature on the subject, the question whether an \emph{arbitrary} Einstein space can contain a shear-free, but rotating and expanding time-like congruence, seems to have attracted little attention \footnote{See \cite{Pantilie2002} for the analogue question in the Riemannian case. Here the example of the Eguchi-Hanson (Ricci-flat) metric provides us with a shear-free, expanding and rotating congruence.}. As one is setting up a set of 5 partial differential equations for the 3 components of the vector field $\w{u}$, it is clear that some restrictions -- either on the time-like congruence or on the geometry -- seem to be inevitable.  

\begin{re} 
The Ricci equations together with the vanishing of
the shear imply that the magnetic part of the Weyl tensor is determined algebraically by
\begin{eqnarray}
H_{11} = -\omega (\dot{u}_3+r_3), \, H_{22}=-\omega (\dot{u}_3-q_3),\, H_{12}=0, \nonumber \\
H_{13} =  \tfrac{1}{3}z_2 -\omega q_1, \, H_{23}= -\tfrac{1}{3}z_1 + \omega r_2 \label{def_H}.
\end{eqnarray}
\end{re}

For the system of equations yielded by the extended tetrad formalism, imposing the existence of a barotropic equation of state $p=p(\mu)$ as well as the vanishing of the shear results in new chains of integrability conditions. The procedure of building up the sequence of integrability conditions has been
carried out in several papers and for details of their derivation we refer the reader for example to
\cite{NorbertKarimianCarminatiHuf2012}.
The final result of this procedure, taking into account all Jacobi equations and Einstein field equations, the 18 Ricci equations, the contracted Bianchi equations, the `$\dot{\w{E}}$', `$\dot{\w{H}}$' and 
`$\w{\nabla}\cdot \w{E}$ ' Bianchi equations and all integrability conditions on $\mu, \theta, \dot{u}_\alpha$ and $\omega$ (the $[\partial_1,\, \partial_3]\omega$ and $[\partial_2,\, \partial_3]\omega$ relations being equivalent with the `$\w{\nabla}\cdot \w{H}$' equations) is presented in Appendix 1; see also \cite{Norbert2013}, or \cite{MaartensBassett1998} for the compact `1+3 covariant form' of some of these equations.

\section{Formulation in terms of basic variables}
An all-important role in the proof will be played by so-called basic objects (cf.~\cite{Slobodeanu2014} and reference therein), having their origin in the foliation theory.
Let $\HH$ denote the space-like subspace of the tangent space, orthogonal to the velocity $\w{u}$. The component along $\HH$ or the restriction to $\HH$ will
be indicated by a superscript. Recall that a tensorial object $\varsigma$ in $(\otimes^r \HH) \otimes (\otimes^s \HH^*)$ is called \emph{basic} if $(\mathcal{L}_{\w{u}} \varsigma)^\HH =0$, $\mathcal{L}$ denoting here the Lie derivative. In particular,
\begin{de}
A function $f$ on $M$ is \emph{basic} if it is conserved along the flow, $\w{u}(f)=0$, and a vector field $\w{X}$ belonging to $\HH$ is \emph{basic} if $[\w{u}, \w{X}]^\HH=0$.
\end{de}
Some immediate properties of basic functions are provided by the following lemma, the proof of which is easily checked:
\begin{lm} \label{bas}
$(i)$ A linear combination of basic vector fields, with basic coefficient functions, is basic.\\
\medskip
$(ii)$ The horizontal part of the commutator of two basic vector fields is basic.\\
\medskip
$(iii)$ If $\w{X}$ is a basic vector field and $f$ a basic function on $M$, then $\w{X}(f)$ is a basic function on $M$.\\
\end{lm}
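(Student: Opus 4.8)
The plan is to verify each of the three statements directly from the Definition, using only the definitions of the Lie bracket and the splitting of the tangent space into $\HH$ and its timelike complement spanned by $\w{u}$.

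First I would prove $(iii)$, since it is the cleanest. If $\w{X}$ is basic, $[\w{u},\w{X}]^\HH = 0$, which means $[\w{u},\w{X}] = c\,\w{u}$ for some function $c$ (the only way a vector can have vanishing horizontal part is to be proportional to $\w{u}$). For a basic function $f$ we have $\w{u}(f)=0$, so $[\w{u},\w{X}](f) = c\,\w{u}(f) = 0$. On the other hand $[\w{u},\w{X}](f) = \w{u}(\w{X}(f)) - \w{X}(\w{u}(f)) = \w{u}(\w{X}(f))$, again because $\w{u}(f)=0$. Hence $\w{u}(\w{X}(f)) = 0$, i.e.\ $\w{X}(f)$ is basic.

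Next, for $(i)$, let $\w{X} = \sum_i f_i \w{X}_i$ with each $\w{X}_i$ basic and each $f_i$ a basic function. Then $[\w{u},\w{X}] = \sum_i \bigl( \w{u}(f_i)\,\w{X}_i + f_i\,[\w{u},\w{X}_i]\bigr)$. The first term vanishes because $\w{u}(f_i)=0$; the horizontal part of the second term vanishes because $[\w{u},\w{X}_i]^\HH = 0$ and $f_i$ is a scalar. So $[\w{u},\w{X}]^\HH = 0$ and $\w{X}$ is basic (note $\w{X}\in\HH$ since each $\w{X}_i\in\HH$). For $(ii)$, let $\w{X},\w{Y}$ be basic. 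Write $\w{Z} = [\w{X},\w{Y}]$ and decompose $\w{Z} = \w{Z}^\HH + (\text{component along }\w{u})$. We must show $[\w{u},\w{Z}^\HH]^\HH = 0$. Using the Jacobi identity, $[\w{u},[\w{X},\w{Y}]] = [[\w{u},\w{X}],\w{Y}] + [\w{X},[\w{u},\w{Y}]]$. Since $[\w{u},\w{X}]$ and $[\w{u},\w{Y}]$ are each proportional to $\w{u}$, say $[\w{u},\w{X}] = a\,\w{u}$ and $[\w{u},\w{Y}] = b\,\w{u}$, the right-hand side becomes $[a\,\w{u},\w{Y}] + [\w{X}, b\,\w{u}] = a[\w{u},\w{Y}] - \w{Y}(a)\w{u} - b[\w{u},\w{X}] + \w{X}(b)\w{u}$, which is again a multiple of $\w{u}$. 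Hence $[\w{u},[\w{X},\w{Y}]]^\HH = 0$. Finally, since $[\w{u},\w{u}]=0$ and $[\w{u},f\w{u}] = \w{u}(f)\w{u}$ is proportional to $\w{u}$, replacing $[\w{X},\w{Y}]$ by its horizontal part $\w{Z}^\HH$ changes $[\w{u},\cdot]$ only by a multiple of $\w{u}$, so $[\w{u},\w{Z}^\HH]^\HH = [\w{u},[\w{X},\w{Y}]]^\HH = 0$, proving $\w{Z}^\HH$ is basic.

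I do not expect any serious obstacle here: the only subtlety is the bookkeeping in $(ii)$, where one must carefully separate the horizontal and $\w{u}$-directional parts and invoke the Jacobi identity, but each step is a one-line computation. This is exactly why the authors call the proof ``easily checked''.
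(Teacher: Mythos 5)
Your verification is correct and is exactly the routine check the paper has in mind when it says the proof is ``easily checked'' (the paper itself gives no details). All three computations are sound: the key observations that $[\w{u},\w{X}]^\HH=0$ forces $[\w{u},\w{X}]$ to be a multiple of $\w{u}$, the Leibniz expansion in $(i)$, and the Jacobi identity plus the fact that replacing $[\w{X},\w{Y}]$ by its horizontal part only changes $[\w{u},\cdot]$ by a multiple of $\w{u}$ in $(ii)$, are precisely what is needed.
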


In the case of a $\gamma$-law equation of state a length scale $\lambda^{-1}$ was
introduced in \cite{Slobodeanu2014}, enabling one to write pressure and energy density as $p=\frac{r-3}{3}\lambda^r, \mu= \lambda^r$
with $r=3\gamma$. This not only leads to a simplification of the equations, but also plays a key role in some of
the arguments, such as in \emph{Proposition 5} of \cite{Slobodeanu2014}.
In order to generalise this proposition to the case of a general barotropic equation of state and to formulate similar
useful criteria, we will introduce the function $\lambda=\lambda(\mu)$ as follows,
\begin{equation}
 \lambda = \exp \int \frac{\ud \mu}{3(p+\mu)}. 
\end{equation}
The case of a linear equation of state (including a possible non-zero cosmological constant),
$p'=\frac{r}{3}-1$, can then be expressed by
\begin{equation}\label{expl_eqstate}
 p=\left(\frac{r}{3}-1\right) \lambda^r-\mu_0, \ \mu=\lambda^r+\mu_0 , 
\end{equation}
($\mu_0, r$ constants).

Throughout the paper we will assume $p' \notin \left\{0, \pm \tfrac{1}{3}, \tfrac{1}{9}\right\}$ (these cases have been already settled; see the introduction for references).\\

In the next sections we need to identify the basic quantities that recurrently appear in our equations, and that are related to
the variables of the perfect fluid problem. We provide now
the following dictionary\footnote{henceforth \emph{fraktur} symbols will be used to indicate basic objects}, where,
for reasons which will become clear in section 4, we found it convenient to introduce also rescaled acceleration variables $\U_\alpha= \dot{u}_\alpha/(\lambda p^\prime)$:

\begin{lm}\label{dict}
The following modified variables are conserved along the flow (are basic functions):

\begin{eqnarray}
& \O = \frac{p+\mu}{\lambda^{5}}\,\omega,\quad  \N={\frac {n}{\lambda}}, \label{convert_O}\\
& \b_1 = -\smfrac{4}{3}\,{\frac {p+\mu}{{\lambda}^{6}}}z_1 - \smfrac{2}{3}\,{\frac
{\left(9p^\prime - 1 \right) (p+\mu)\,\omega}{{\lambda}^
{5}}}\U_2, \label{convert_b1}\\
& \b_2 = -\smfrac{4}{3}\,\frac {p+\mu}{\lambda^6}z_2 +
\smfrac{2}{3}\,{\frac {\left(9p^\prime - 1 \right) (p+\mu) \omega}{{\lambda}^{5}}}\U_1, \label{convert_b2}\\
& \b_3=-\smfrac{4}{3}\,{\frac {p+\mu}{{\lambda}^{6}}}z_3, \label{convert_b3}\\
& \Q_1 = -\smfrac{1}{3}\,\U_1+ \frac {q_1}{\lambda}, \quad
\R_1=\smfrac{1}{3}\,\U_1 + \frac {r_1}{\lambda}, \label{convert_Q1R1}\\
&\Q_2 = -\smfrac{1}{3}\,\U_2 + \frac{q_2}{\lambda},\quad
\R_2=\smfrac{1}{3}\,\U_2+\frac {r_2}{\lambda}, \label{convert_Q2R2}\\
&{\Q_3}+{\R_3}=-\smfrac{1}{3}\,\U_3 +\frac {q_3}{\lambda}, \quad
{\Q_3}-{\R_3}=\smfrac{1}{3}\,\U_3+\frac{r_3}{\lambda}, \label{convert_Q3R3}\\
& \J=(1-2 G) \dot{\w{U}}^2+\lambda^{-2}\left(\theta^2-3 \mu-2\frac{j}{p'}\right)+9\O^2\frac{\lambda^8}{(p+\mu)^2}, \label{convert_J}\\
& \E_{\alpha \beta} = \frac{3p' + 1}{\lambda^2 p'}E_{\alpha \beta} + G \U_\alpha\U_\beta, \quad (\alpha,\beta)=(1,2),\,
(1,3),\, (2,3), \label{convert_E123}\\
& \E_0 = \frac{3p' + 1}{\lambda^2 p'}(E_{11}-E_{22}) + G (\U_1^2-\U_2^2) ,\label{convert_E0}\\
& \E_3 = \frac{3p' + 1}{\lambda^2 p'}E_{33}-\frac{G}{3} (\U_1^2+\U_2^2-2\U_3^2)+\frac{2(9p'+1)\O^2}{3p'}\frac{\lambda^8}{(p+\mu)^2} .\label{convert_E3}
\end{eqnarray}
\end{lm}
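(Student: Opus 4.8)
The plan is to verify directly that each of the listed quantities $\O, \N, \b_\alpha, \Q_\alpha, \R_\alpha, \J, \E_{\alpha\beta}, \E_0, \E_3$ is annihilated by the dot-operator $\w{u}(\,\cdot\,)=\partial_0$, using the evolution equations collected in Appendix 1 together with the conservation law $\dot\mu + (\mu+p)\theta = 0$ and its consequences for $\lambda$ and $p'$. The essential preliminary computation is that $\partial_0 \lambda = -\smfrac{1}{3}\theta\lambda$ (immediate from the definition $\lambda = \exp\int \ud\mu/(3(p+\mu))$ and the conservation law), and hence $\partial_0(\lambda^k) = -\smfrac{k}{3}\theta\lambda^k$ for any constant $k$; similarly $\partial_0 p' = p''\dot\mu = -p''(\mu+p)\theta$, $\partial_0 G = G'\dot\mu = -p'(\mu+p)\theta\, G_p$, and $\partial_0(p+\mu) = -(1+p')(\mu+p)\theta$. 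With these in hand the strategy for each entry is the same: write the quantity as (scalar built from $\mu$) $\times$ (kinematical/curvature quantity), apply the product rule, substitute the propagation equation for the kinematical/curvature factor from Appendix 1 (the shear-free Raychaudhuri equation for $\theta$, the $\dot\omega$ equation, the $\dot z_\alpha$, $\dot{\dot u}_\alpha$, $\dot q_\alpha$, $\dot r_\alpha$, $\dot n$ equations, the $\dot j$ equation, and the $\dot{\w E}$ Bianchi equations), and check that everything cancels.

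Concretely, I would proceed in the order in which the quantities are listed, since the later ones depend on the earlier ones. First $\O$ and $\N$: here $\partial_0\omega$ and $\partial_0 n$ are given by their evolution equations (for $\omega$: $\dot\omega = -(\smfrac23\theta + \text{something with }\dot u)\omega$ after using $\w\Omega+\w\omega=0$ and $\sigma=0$; for $n$: the $\dot n_{11}=\dot n_{22}$ equation), and one matches the powers of $\lambda$ and the factor $p+\mu$ so the $\theta$-terms and acceleration-terms cancel against $\partial_0$ of the prefactor. Next the $\b_\alpha$: $z_\alpha = \partial_\alpha\theta$ obeys a propagation equation obtained by commuting $\partial_0$ with $\partial_\alpha$ and using Raychaudhuri; the combination in $\b_1,\b_2$ with the extra $\U_2, \U_1$ term is precisely engineered so the acceleration cross-terms coming from the commutator $[\partial_0,\partial_\alpha]\theta$ cancel — this is where the coefficient $9p'-1$ and the rescaling by $\lambda p'$ enters, and it is the first genuinely delicate check. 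Then $\Q_\alpha, \R_\alpha$: use $\dot q_\alpha, \dot r_\alpha$ from Appendix 1 (equivalently the $\dot a_\alpha, \dot n_{\alpha\mp1\,\alpha\pm1}$ equations) together with $\dot{\dot u}_\alpha$ and the definition $\U_\alpha = \dot u_\alpha/(\lambda p')$, again matching coefficients. Finally the curvature quantities $\J, \E_{\alpha\beta}, \E_0, \E_3$: for $\J$ one needs $\dot j$, $\dot\theta$, $\dot\mu$ and $\dot{\w U}{}^2$; for the $\E$'s one needs the $\dot{\w E}$ Bianchi equations (which also bring in $\w H$, but $\w H$ is algebraically determined by the Remark, so those terms are expressible in the kinematical variables already under control) plus the $G\U_\alpha\U_\beta$ correction terms, whose role is to absorb the non-basic pieces of $\dot E_{\alpha\beta}$.

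The main obstacle I expect is bookkeeping rather than conceptual: the $\dot{\w E}$ Bianchi equations have many terms (products of $E$, $H$, $\dot u$, $z$, $q$, $r$, $n$, $\theta$, $\mu$), and verifying that the specific linear combinations defining $\E_{\alpha\beta}, \E_0, \E_3$ — with their particular $G$-, $\O^2$- and $\lambda$-weights — are exactly the ones that kill all $\theta$-proportional and acceleration-proportional remainders requires carefully substituting the Remark's formulae for $H_{\alpha\beta}$ and then collecting. A secondary subtlety is that a few of the evolution equations in Appendix 1 hold only modulo the constraint (non-evolution) equations and the tetrad-fixing relations ($\w\Omega+\w\omega=0$, $n_{11}=n_{22}=n$), so one must be careful that the constraints used are $\partial_0$-stable or are genuine identities; in practice this is handled by the fact, quoted in the excerpt, that the full governing system is integrable. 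Once the pattern for one representative of each type ($\O$, $\b_1$, $\Q_1$, $\J$, $\E_{12}$) is established, the remaining cases follow by the same mechanism together with the residual discrete symmetries of the tetrad, so I would present those representative checks in detail and indicate that the others are analogous.
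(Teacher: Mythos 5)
Your proposal is correct and follows exactly the route the paper takes: the paper's entire proof is the sentence ``Straightforward but lengthy computation using the propagation rules (see appendix 1) for each quantity involved,'' and your plan — establishing $\partial_0\lambda=-\smfrac{1}{3}\theta\lambda$, $\partial_0 p'=-p''(p+\mu)\theta$, $\partial_0 G=-p'(p+\mu)\theta\,G_p$ from the conservation law and then applying the product rule together with the Appendix 1 evolution equations to each entry — is precisely that computation, spelled out. The representative checks you single out ($\O$, $\b_1$, $\Q_1$, $\J$, $\E_{12}$) do close as you predict, e.g.\ $\partial_0\O=\smfrac{(p+\mu)\omega\theta}{\lambda^5}\bigl[-(1+p')+\smfrac{5}{3}+p'-\smfrac{2}{3}\bigr]=0$.
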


\begin{proof}
Straightforward but lengthy computation using the propagation rules (see appendix 1) for each quantity involved.
\end{proof}

\begin{re}\label{Einvar}
Note that the basic objects $\smfrac{1}{2}\E_0 + i \E_{12}$ and $\E_{13} + i \E_{23}$, transform as follows under a basic rotation in the $(1,2)$ plane:
\begin{eqnarray*}
\smfrac{1}{2}\E_0 + i \E_{12} \longrightarrow e^{2 i \alpha} (\smfrac{1}{2}\E_0 + i \E_{12}), \nonumber \\
\E_{13}+i \E_{23} \longrightarrow e^{i \alpha} (\E_{13}+i \E_{23}).
\end{eqnarray*}
This shows that conditions like $\E_{13}=\E_{23}=0$, occurring in for example Lemma \ref{Theorem3}, have a truly invariant (frame-independent) 
meaning.
\end{re}

It will be convenient also to rewrite the spatial basis in terms of the basic vector fields 
\begin{equation}\label{basxyz}
\X=\lambda^{-1}\partial_1, \quad \Y = \lambda^{-1}\partial_2, \quad \Z = \lambda^{-1}\partial_3.
\end{equation} 
It follows then that
$\U_1= -3 \X(\ln \lambda)$, $\U_2= -3 \Y(\ln \lambda)$, and $\U_3= -3 \Z(\ln \lambda)$. 

Acting with the operators (\ref{basxyz}) on (\ref{convert_O}$_a$) one obtains 
the basic equations

\begin{equation}\label{XYZO}
\X(\O) = -\O \Q_1 - \half \b_2, \quad
\Y(\O) =  \O \R_2 + \half \b_1, \quad
\Z(\O) = -2 \O \R_3 . 
\end{equation}

with integrability conditions given by (\ref{basic_eq1}+\ref{basic_eq6},\ref{basic_eq2},\ref{basic_eq3}).

\begin{re} 
In terms of the vector fields (\ref{basxyz}), many quantities in Lemma \ref{dict} are easily recognised as being basic. First recall \cite{Slobodeanu2014} that to our fluid one can locally associate a transversally conformal submersion $\varphi:(M,g) \to (N,h)$ onto a Riemannian $3$-manifold having $\w{u}$ tangent to its fibres. Then notice that any tensorial quantity constructed by pull-back is clearly basic. In particular, since $\lambda^2 g^\mathcal{H} = \varphi^*h$ is basic, it follows (using also Lemma \ref{bas}) that $\N = \lambda^2 g([\Z, \X], \Y)$ is basic. A similar argument holds for $\Q_i$ and $\R_i$:
\begin{eqnarray*}
\fl \Q_{1} &= \lambda^2 g([\Z, \X], \Z) , \quad \Q_{2}= \lambda^2 g([\X, \Y], \X), 
\quad \Q_{3} = \tfrac{\lambda^2}{2}\big( g([\Z, \X], \X)-g([\Z, \Y], \Y)\big), \\
\fl  \R_{1} & = \lambda^2 g([\X, \Y], \Y), \quad \R_{2}= -\lambda^2 g([\Z, \Y], \Z), \  \R_{3}=-\tfrac{\lambda^2}{2}\big( g([\Z, \X], \X) + g([\Z, \Y], \Y)\big).
\end{eqnarray*}
As to $\J$, $\E_{\alpha \beta}$, $\E_{0}$ and $\E_{3}$, they correspond, up to constant factors, to the following pull-backed curvatures of the `material manifold' $N$: $R^N \circ \varphi$ (scalar curvature), $\varphi^* R_{\alpha \beta}^N$ 
($\alpha \neq \beta$), $\varphi^* R_{11}^N -\varphi^* R_{22}^N$, and $\frac{1}{3}R^N \circ \varphi-\varphi^* R_{33}^N$ (Ricci curvatures), respectively; in particular, they 
are basic functions. Finally, if the equation of state is linear, $\O$, $\b_1$, $\b_2$ and $\b_3$ correspond respectively to the basic functions 
$\frac{p'+1}{2}\Omega(\X, \Y)$, $(p'+1)\beta(\X)$, $(p'+1)\beta(\Y)$ and $(p'+1)\beta(\Z)$, defined in \cite{Slobodeanu2014}.
\end{re}

Translating the equations of Appendix 1 in terms of the basic variables $\O,\J,\b_{\alpha},\Q_\alpha,\R_\alpha,\E_0,\E_3$
and the non-basic variables $p, \mu, \theta, \U_{\alpha}$, augmented with all information obtainable by acting with the $\partial_\alpha$ operators 
on the remaining dictionary elements, one derives a set of equations which can be split
into
\begin{itemize}
\item evolution equations for the non-basic quantities $\mu, \theta, \U_\alpha$, namely (\ref{e0mu}) and
\begin{eqnarray}
 \partial_0 \theta = \half \lambda^2p'[(1-2 G) \dot{\w{U}}^2-\J]+\half \O^2(9p'+4)(p+\mu)^{-2}\lambda^{10}\nonumber \\
\ \ \ \ \ \ \ +\tfrac{1}{6}(3p'-2)\theta^2- \tfrac{1}{2}(3p'+1)\mu -\tfrac{3}{2}p, \label{utheta}\\
 \partial_0 \U_1 = p' \theta \U_1+\frac{\lambda^5}{p+\mu}\left(-\frac{3}{4}\b_1-\frac{9p'-1}{2}\O  \U_2 \right) ,\label{uU1}\\
 \partial_0 \U_2 = p' \theta \U_2+\frac{\lambda^5}{p+\mu}\left(-\frac{3}{4}\b_2+\frac{9p'-1}{2}\O \U_1\right) ,\label{uU2}\\
 \partial_0 \U_3 = p' \theta \U_3 -\frac{\lambda^5}{p+\mu}\frac{3\b_3}{4},\label{uU3}
\end{eqnarray}

\item purely basic equations, which will play only a minor role and which, for the sake of readability, are presented in Appendix 2,

\item algebraic equations in $\dot{U}_\alpha$ and $\theta$, with the basic functions
$\B_1, \ldots, \B_{16}$ defined by equations
(\ref{basic_eq11}--\ref{basic_eq26}) of Appendix 2,
\end{itemize}
\begin{eqnarray}
\fl \ \ \ \ \ \ \diamondsuit \ \textrm{the } \dot{\w{H}} \textrm{ equations (\ref{H13_0},\ref{H23_0},\ref{H12_0},\ref{H11_0},\ref{H22_0}):} \nonumber \\
\fl \left[\frac{3p'^2}{3p'+1}\E_{12}-\h_8\O\frac{\lambda^3 \theta}{p+\mu} \right]\U_1-\smfrac{3}{2}\left[\frac{p'^2}{3p'+1}(\E_0-3\E_3)+
\frac{(p'+1)(81p'^2-5)}{2(3p'+1)}\frac{\O^2\lambda^8}{(p+\mu)^2}\right]\U_2\nonumber \\
-\frac{3p'^2}{3p'+1}\E_{23}\U_3 -(3p'+1)\big(\Q_1\O + \tfrac{1}{4} \b_2 \big)\frac{\lambda^3 \theta}{p+\mu}\nonumber \\
- 3\O\big(\R_2\O+\tfrac{1}{8}(9p'+11)\b_1\big)\frac{\lambda^8}{(p+\mu)^2}
+\B_{12}=0, \label{m_eq12}\\[3mm]
\fl -\tfrac{3}{2}\left[\frac{p'^2}{3p'+1}(\E_0+3\E_3)-
\frac{(p'+1)(81p'^2-5)}{2(3p'+1)}\frac{\O^2\lambda^8}{(p+\mu)^2}\right]\U_1 - \left[\frac{3p'^2}{3p'+1}\E_{12}+\h_8\O \frac{\lambda^3 \theta}{p+\mu} \right]\U_2\nonumber \\
+\frac{3p'^2}{3p'+1}\E_{13}\U_3 +(3p'+1)\big(\R_2\O + \tfrac{1}{4}\b_1 \big) \frac{\lambda^3 \theta}{p+\mu}\nonumber \\
- 3\O\big(\Q_1\O+\tfrac{1}{8}(9p'+11)\b_2\big)\frac{\lambda^8}{(p+\mu)^2}+\B_{13}=0 \label{m_eq13},\\[3mm]
\fl \frac{3{p'}^2}{3{p'}+1}(\E_{13}\U_1-\E_{23}\U_2-\E_0\U_3)+ 6\Q_3\frac{\O^2\lambda^8}{(p+\mu)^2}-\B_{11}=0,  \label{m_eq11}\\[3mm]
\fl \frac{6 {p'}^2}{3{p'}+1}(\E_{23}\U_1+\E_{13} \U_2-2 \E_{12}\U_3)- 2(3p'+1)\Q_3\O\frac{\lambda^3 \theta}{p+\mu} + \B_9-\B_{10}
= 0,\label{m_eq9}\\[3mm]
\fl \frac{6 {p'}^2}{3{p'}+1}(\E_{23}\U_1-\E_{13} \U_2)-\frac{\lambda^3\theta \O}{p+\mu}\left(\tfrac{2}{3}(9{p'}G-9{p'}^2+1) \U_3
+2(3{p'}+1)\R_3 \right)\nonumber \\
 -\tfrac{9}{2}(p'+1)\b_3\frac{\O\lambda^8}{(p+\mu)^2} -\B_9-\B_{10} = 0,\label{m_eq10}\\[3mm]
\fl \ \ \ \ \ \ \diamondsuit \ \textrm{the integrability conditions } \partial_A \partial_3 \theta - \partial_3 \partial_A \theta-[\partial_A,\, \partial_3] \theta
=  0\ (A=1,2): \nonumber \\
\fl \h_1 \O \U_2\U_3 + \tfrac{5}{2}(3p'-1) \b_3 \U_1 + 2(9{p'}-1)\O(\Q_3+\R_3)\U_2 \nonumber \\
+\left[2 \O \R_2(9{p'}-1)-\b_1(3{p'}-2)\right]\U_3 - \frac{2(9{p'}-1)}{3{p'}+1}\O\E_{23} - 3 \B_1 = 0, \label{m_eq1} \\[3mm]
\fl  \h_1 \O \U_1\U_3 -\tfrac{5}{2}(3{p'}-1)\b_3 \U_2 -2(9{p'}-1)\O(\Q_3-\R_3)\U_1
\nonumber \\
 -\left[2 \O \Q_1(9{p'}-1)-\b_2(3{p'}-2)\right]\U_3 - \frac{2(9{p'}-1)}{3{p'}+1}\O\E_{13}+3\B_2 = 0, \label{m_eq2} \\[3mm]
\fl \ \ \ \ \ \ \diamondsuit \ \textrm{equations resulting by evaluation of } \partial_1(\ref{convert_b2})-\partial_2(\ref{convert_b1}) \textrm{ and }
\partial_2(\ref{convert_b2})-\partial_1(\ref{convert_b1}):\nonumber \\
\fl \O [ \h_2 (\U_1^2+\U_2^2)+\h_3 \U_3^2] + 3 (3  {p'} +1) (9{p'}-1) \O  \R_3 \U_3+\smfrac{3}{2}
(3  {p'} +1) (15  {p'}  \mu+6 p+\mu)\lambda^{-2}\O\nonumber \\
+\smfrac{3}{4}(3  {p'} +1)  [2 \O  \Q_1(9{p'} -1)+3 \b_2(4{p'} -1)] \U_1 \nonumber \\
- \smfrac{3}{4}(3  {p'} +1) [2 \O  \R_2(9{p'} -1)+3 \b_1(4{p'} -1)] \U_2  \nonumber \\
-\tfrac{5}{2}(3p' + 1)(3p' - 1)\O \lambda^{-2} \theta^2 -\smfrac{3}{2}  (135  {p'}^2 + 96 p' + 1) \lambda^8 (p+\mu)^{-2}
\O ^3 \nonumber \\
+[\half(45{p'}^2 + 12 p' -1) \J
+\smfrac{3}{2} (9  {p'} -1) \E_{3}] \O + \tfrac{9}{4}(3p' + 1)\B_3  = 0, \label{m_eq3} \\[3mm]
\fl \big[\h_4 (\U_1^2 - \U_2^2) - p'(3G - 2) \E_0\big]\theta + \frac{(3p' +1)\lambda^5}{p+\mu}\big\{\h_5 \O \U_1\U_2 -(9p' -1)  \O \E_{12}
\nonumber \\
+\tfrac{1}{4}
[\h_6 \b_1-2 (3  p' +1) (9  p'-1 ) \O  \R_2] \U_1 -\tfrac{1}{4}[\h_6 \b_2-2 (3  p' +1) (9  p' - 1) \O  \Q_1] \U_2 \nonumber \\
+\tfrac{3}{4} (3  {p'} +1) \B_4 \big\} = 0, \label{m_eq4} \\[3mm]
\fl \ \ \ \ \ \ \diamondsuit \ \textrm{equations resulting by evaluation of } \partial_1(\ref{convert_b2}) + \partial_2(\ref{convert_b1}) \textrm{ and }
\partial_2(\ref{convert_b2}) + \partial_1(\ref{convert_b1}): \nonumber \\
\fl 2\big[- \h_4  \U_1\U_2 + p'(3 G-2) \E_{12}\big]\theta
+ \frac{(3p' +1)\lambda^5}{p+\mu}\big\{\tfrac{1}{2}\h_5 \O (\U_1^2-\U_2^2) - 
\tfrac{1}{2}(9 p' - 1)\O  \E_0 + \tfrac{3}{4} (3 p' + 1) \B_5 \nonumber \\
-\tfrac{1}{4}[\h_6 \b_2 - 2(3p' + 1)(9p' -1) \O  \Q_1] \U_1 -\tfrac{1}{4}[\h_6 \b_1-2 (3  {p'} +1) (9  {p'} -1 ) \O  \R_2] \U_2 \nonumber\\
-(9  {p'} -1) (3p' +1) \O  \Q_3 \U_3 \big\}  = 0, \label{m_eq5} \\[3mm]
\fl \big[\h_4 (\U_1^2 + \U_2^2 - 2 \U_3^2) + 3p'(3G - 2) \E_3 \big] \theta + \frac{(3p' +1)\lambda^5}{p+\mu}\big\{-\tfrac{9}{4} (3p' +1) \B_6 \nonumber \\
\fl \ \ \ \ + \tfrac{3}{4} [2 (3  {p'} +1) (9  {p'} -1) \O  \R_2+(6 G  {p'} +3  {p'} +1) \b_1]\U_1 \nonumber \\
\fl \ \ \ \ + \tfrac{3}{4} [2 (3  {p'} +1) (9  {p'} -1) \O  \Q_1+(6 G  {p'} +3  {p'} +1) \b_2]\U_2 \nonumber\\
\fl \ \ \ \ -\tfrac{9}{4} (4 G  {p'} -9  {p'} ^2+1) \b_3 \U_3 \big\}
-\frac{\lambda^8 \theta}{(p+\mu)^2}(36 G  {p'} -81  {p'} ^3+27  {p'} ^2-27  {p'} -7) \O ^2 = 0, \label{m_eq6} \\[3mm]
\fl \ \ \ \ \ \ \diamondsuit \ \textrm{equations resulting by evaluation of } \partial_1(\ref{convert_b3}) \textrm{ and } \partial_2(\ref{convert_b3}):\nonumber \\
\fl \big[\h_4  \U_1\U_3 - p'(3 G-2)\E_{13}\big]\theta + \frac{(3p' +1)\lambda^5}{p+\mu} \big\{\h_7 \O \U_2\U_3
+\tfrac{1}{8}(18 G  {p'} -63  {p'} ^2+7)\b_3 \U_1
\nonumber \\
-\smfrac{1}{2} (9  {p'} -1) (3  {p'} +1) (\Q_3-\R_3) \O  \U_2 \nonumber \\
+ \tfrac{1}{4}(9Gp' -9{p'}^2+1) \b_1 \U_3 - \smfrac{3}{4}(3p'+1)\B_7\big\} = 0,  \label{m_eq7} \\[3mm]
\fl \big[\h_4  \U_2 \U_3 - p'(3 G-2) \E_{23}\big]\theta + \frac{(3p' +1)\lambda^5}{p+\mu} \big\{- \h_7 \O \U_1\U_3
+\tfrac{1}{8}(18 G  {p'} -63  {p'} ^2+7) \b_3 \U_2 \nonumber \\
-\smfrac{1}{2} (9  {p'} -1) (3  {p'} +1) (\Q_3+\R_3) \O  \U_1 \nonumber \\
+ \tfrac{1}{4}(9 G  {p'} -9  {p'} ^2+1) \b_2 \U_3 -\smfrac{3}{4} (3  {p'} +1) \B_8 \big\}= 0. \label{m_eq8}
\end{eqnarray}

In these equations $\h_1,\ldots, \h_7$ are functions of $\mu$ defined by
\begin{eqnarray}
\h_1 = 2\frac{36 {p'} G +9{p'}^2-1}{3(3{p'}+1)},\nonumber\\
\h_2 = 45 G {p'}^2+21 G {p'} +9 {p'}^2-1,\nonumber\\
\h_3= \half (90 G  {p'} ^2+6 G  {p'} +9  {p'} ^2-1),\\
\h_4 = p'\big(3  G'  (p+\mu)(1+3{p'})+3 G^2 -18 G  {p'} ^2-6 G  {p'} -2 G\big),\nonumber\\
\h_5 =  6 Gp'(9p' + 1) - 54{p'}^3 + 9{p'}^2 + 6  {p'} -1,\nonumber\\
\h_6 = 18 G p'-54 p'^2-3 p'+5,\nonumber\\
\h_7 = \tfrac{1}{6}(9p' -1)(9 G p' - 9{p'} ^2+1),\nonumber \\
\h_8 = \half (9Gp'-9p'^2+1) \nonumber .
\end{eqnarray}

\noindent Note that (\ref{m_eq1}, \ref{m_eq2}, \ref{m_eq3}, \ref{m_eq4}) in the case of a linear equation of state correspond respectively to equations (26), (27), (28) and (23) of \cite{Slobodeanu2014}.

\section{General theorems}

In this section we present some criteria which will be used later on, but which also may turn out
to be helpful when tackling the conjecture for a general barotropic equation of state. We begin with two theorems generalising Proposition 5 of
\cite{Slobodeanu2014} for arbitrary $p(\mu)$.

\begin{te} \label{th1}
If for a shear-free perfect fluid obeying a barotropic equation of state $\U_1$ and $\U_2$ are basic,
then $\omega \theta = 0$.
\end{te}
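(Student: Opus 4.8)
The plan is to convert the hypothesis into two algebraic identities, differentiate them along $\w{u}$, and then split the argument according to whether the equation of state is linear ($p''=0$) or not.

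\emph{Step 1 (unpacking the hypothesis).} That $\U_1$ and $\U_2$ are basic means $\partial_0\U_1=\partial_0\U_2=0$, so the evolution equations (\ref{uU1})--(\ref{uU2}) collapse to two algebraic relations. Contracting the first with $\U_1$ and the second with $\U_2$ and adding, the $\O$-terms cancel and one obtains
\begin{equation*}
\frac{p'(p+\mu)\,\theta}{\lambda^{5}}\,(\U_1^2+\U_2^2) = \tfrac34(\b_1\U_1+\b_2\U_2);
\end{equation*}
contracting the first with $\U_2$ and the second with $\U_1$ and subtracting, the $\theta$-terms cancel and one gets
\begin{equation*}
\tfrac12(9p'-1)\,\O\,(\U_1^2+\U_2^2) = \tfrac34(\b_1\U_2-\b_2\U_1).
\end{equation*}
By Lemmas \ref{bas} and \ref{dict} the right-hand sides, as well as $\O(\U_1^2+\U_2^2)$, are basic functions.

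\emph{Steps 2--4 (the case $p''\neq0$).} Assume, towards a contradiction, that $\omega\theta\neq0$ on an open set, so that $\O\neq0$ and $\theta\neq0$ there. Applying $\partial_0$ to the second identity of Step 1, every term vanishes except $\partial_0\!\bigl(\tfrac12(9p'-1)\bigr)\O(\U_1^2+\U_2^2)$; using the conservation law $\partial_0\mu=-(p+\mu)\theta$ (equation (\ref{e0mu})) this equals $-\tfrac92 p''(p+\mu)\theta\,\O(\U_1^2+\U_2^2)$, which must therefore vanish. Hence either $p''=0$ or $\U_1^2+\U_2^2=0$. In the latter case $\U_1=\U_2=0$, i.e.\ $\dot{\w{u}}$ is parallel to $\w{\omega}$, and for such fluids the conjecture --- hence $\omega\theta=0$ --- is already established (\cite{WhiteCollins,SenSopSze}): contradiction. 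From now on $p''=0$ (a linear equation of state, $G=\tfrac13-p'$ constant) and $\U_1^2+\U_2^2\neq0$.

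\emph{Step 5 (the linear case).} Now the second identity of Step 1 is a tautology, but applying $\partial_0$ to the first identity forces $\partial_0\!\bigl(p'(p+\mu)\theta\lambda^{-5}\bigr)=0$, which (using $\partial_0\lambda=-\tfrac13\lambda\theta$ and $\tfrac{p''(p+\mu)}{p'}=G+p'-\tfrac13$) yields $\partial_0\theta=(G+2p'-1)\theta^{2}=(p'-\tfrac23)\theta^{2}$. Equating this with the right-hand side of (\ref{utheta}) expresses $\theta^{2}$ as an explicit algebraic function of $\mu,p,\lambda$ and of the basic quantities $\J$, $\O$ and $\dot{\w{U}}^2$. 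One then substitutes $\b_1,\b_2$ (read off from Step 1) into the algebraic equations (\ref{m_eq1})--(\ref{m_eq6}), which in the linear case contain the analogues of equations (23) and (26)--(28) of \cite{Slobodeanu2014}, and, together with a second differentiation of the $\theta^{2}$-relation along $\w{u}$ (using (\ref{uU3}) for $\partial_0\U_3$, which still carries the non-basic $\b_3$), runs the argument of Proposition~5 of \cite{Slobodeanu2014}; this produces a purely basic identity incompatible with $\O\neq0$, the residual sub-case $\U_1=\U_2=0$ being again covered by \cite{WhiteCollins,SenSopSze}. The arguments of Theorems \ref{Theorem4}--\ref{Theorem5} are not invoked at this stage.

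\emph{Main obstacle.} The delicate part is the endgame of Step~5: one must select the right linear combinations of the $\dot{\w{H}}$-equations (\ref{m_eq9})--(\ref{m_eq13}), the $\partial_A\partial_3\theta$ integrability conditions (\ref{m_eq1})--(\ref{m_eq2}) and the $\b_\alpha$-derived equations (\ref{m_eq3})--(\ref{m_eq8}), so that after repeated use of $\partial_0$ and the evolution equations the remaining non-basic unknowns $\theta$ and $\U_3$ are eliminated; particular care is needed near the excluded or degenerate values of $p'$ (e.g.\ $p'=\tfrac23$, where the factor $3p'-2$ and some of the $\h_i$ vanish).
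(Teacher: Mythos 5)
Your Steps 1--4 and the first half of Step 5 are sound, and they recover the paper's two intermediate conclusions by a slightly different (arguably cleaner) route: the ``angular'' contraction of $\partial_0\U_1=\partial_0\U_2=0$ gives $p''(p+\mu)\theta\,\O(\U_1^2+\U_2^2)=0$, which is the paper's condition $G+p'-\tfrac13=0$ (equation (\ref{th1_eq2})), while the ``radial'' contraction combined with (\ref{utheta}) reproduces exactly the paper's expression (\ref{th1_j}) for $j$; the paper instead obtains both conditions at once from the positive-definiteness of a $2\times2$ homogeneous system in $\u_1,\u_2$. (Your second identity has the wrong sign --- it should read $\tfrac12(9p'-1)\O(\U_1^2+\U_2^2)=\tfrac34(\b_2\U_1-\b_1\U_2)$ --- but this is immaterial.) The disposal of the sub-case $\U_1=\U_2=0$ via \cite{WhiteCollins} is also the same move the paper makes.

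The genuine gap is the endgame of Step 5, which you do not carry out and which cannot be discharged by ``running the argument of Proposition 5 of \cite{Slobodeanu2014}'': that proposition is precisely what Theorem \ref{th1} is meant to generalise (it assumes a $\gamma$-law with vanishing cosmological constant), so the contradiction must actually be produced here. The paper's closing argument is short but specific, and it is driven by \emph{spatial} derivatives rather than by further $\partial_0$-derivatives or the machinery of (\ref{m_eq1})--(\ref{m_eq6}) (which the paper reserves for Theorems \ref{Theorem4} and \ref{Theorem5}): acting with $\partial_3$ on (\ref{th1_j}) yields $\theta(z_3+\theta\u_3)(3p'-2)=0$; the branch $z_3+\theta\u_3=0$, combined with $z_1+\theta\u_1=z_2+\theta\u_2=0$, makes $F=\log\theta-\int(p+\mu)^{-1}\,\ud p$ spatially constant and hence forces either $\omega=0$ or $\theta=\theta(\mu)$, the case already settled in \cite{TreciokasEllis,Langth,Lang}; what survives is exactly $p'=2/3$, which is then eliminated by taking $\partial_1$ of $z_2+\theta\u_2=0$ and two further propagations along $\w{u}$, ending in $(p+\mu)\theta^2=0$. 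Note in particular that your ``main obstacle'' paragraph treats $p'=2/3$ as a degenerate value to be handled with care, whereas it is in fact the single surviving case that the entire remaining argument must kill; without that step (and without the $\partial_3$-differentiation that isolates it) the proof is not complete.
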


\begin{proof} Assume $\omega \theta \neq 0$. Using the evolution equations for $\mu, \u_1,\u_2$, the conditions $\partial_0(\U_1)=
\partial_0(\U_2)=0$ are equivalent with  $z_1+\theta \u_1 = z_2+\theta \u_2=0$. Applying $\partial_0$ to the latter two equations and substituting for $z_1,z_2$
yields a homogeneous system in $\u_1,\u_2$, the coefficient matrix of which is positive definite (in which case \cite{WhiteCollins} applies and the proof ends), unless
\begin{equation}\label{th1_eq1}
 \left(\frac{2}{3}-G-2p'\right)\theta^2+2\omega^2-\frac{\mu+3 p}{2}+j = 0
\end{equation}
and
\begin{equation}\label{th1_eq2}
G + p^\prime - \frac{1}{3} = 0.
\end{equation}

The second of these conditions implies that we have a linear equation of state ($p''=0$), which, when substituted in the
first, yields
\begin{equation}\label{th1_j}
 j= \left(p'-\frac{1}{3}\right) {\theta}^{2}-2\,{\omega}^{2}+\frac{\mu + 3p}{2}.
\end{equation}
Acting on this with $\partial_3$ gives, using (\ref{e3j}), $\theta (z_3+\theta \u_3)(3p'-2)=0$. If $z_3+\theta \u_3=0$ then, by (\ref{gradmu}, \ref{gradtheta}),
we have $z_\alpha -\theta \frac{\partial_\alpha p}{p+\mu} =0$ and hence, with $F=\log \theta - \int (p+\mu)^{-1}\ud p$,
$\ud F = -\partial_0 F \w{u}^\flat$. This shows that $\w{u}$ is hypersurface orthogonal (and hence the vorticity vanishes), unless $F$ is constant, whence $\theta=\theta(\mu)$, which is the case treated in \cite{TreciokasEllis,Langth, Lang}.
Hence, as $\omega \theta \neq 0$, we necessarily have $p'=2/3$.
Then however the $\partial_1$ derivative of $z_2+\theta \u_2=0$ gives $2\omega^2 \theta^2-z_3^2-\theta \u_3 z_3=0$, which can be used to eliminate the
$\u_3 z_3$ term arising in $\partial_0(\ref{th1_j})$ to yield  $\theta^2(20 \omega^2+5 \u_3^2+\frac{9}{2}(p+\mu))-5z_3^2=0$. Propagating this again along $\w{u}$ and using the previous results to eliminate $z_\alpha$ and $j$, eventually gives
$(p+\mu) \theta^2=0$.
\end{proof}

\begin{te} \label{th2}
If for a rotating and expanding shear-free perfect fluid, obeying a barotropic equation of state, $\U_3$ is basic, then a Killing vector along the vorticity exists.
\end{te}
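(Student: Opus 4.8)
The plan is, first, to deduce from the hypothesis that the acceleration is orthogonal to the vorticity, $\dot{u}_3=0$, and then to exhibit explicitly a Killing vector field proportional to $\partial_3$. Inserting $\partial_0\U_3=0$ into the evolution equation (\ref{uU3}) gives at once $\b_3=\tfrac{4p'(p+\mu)}{3\lambda^{5}}\,\theta\,\U_3$, which by (\ref{convert_b3}) is the same as $z_3+\theta\,\dot{u}_3=0$; combined with the Euler equation this also yields $\partial_3 F=0$, with $F=\log\theta-\int(p+\mu)^{-1}\ud p$ as in the proof of Theorem \ref{th1}. Moreover, since $\U_3$ is now basic, the dictionary relations (\ref{convert_Q3R3}) together with Lemma \ref{bas} make $\Q_3$ and $\R_3$ (hence $q_3/\lambda$, $r_3/\lambda$, $a_3/\lambda$) basic, and the operator $\Z=\lambda^{-1}\partial_3$ carries basic functions to basic functions.

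\emph{Step 1: $\dot{u}_3=0$.} Assume not. Then $\b_3/\U_3=\tfrac{4p'(p+\mu)}{3\lambda^{5}}\theta$ is a quotient of basic functions, hence basic; acting with $\partial_0$ and eliminating $\partial_0\mu$ through $\dot{\mu}+(\mu+p)\theta=0$ yields the rigid constraint $\partial_0\theta=(G+2p'-1)\theta^{2}$. Subtracting this from (\ref{utheta}) produces an algebraic relation between $\theta^{2}$, $\dot{\w{U}}^2$ and basic functions. Feeding this relation, and the basic character of $\U_3$, back into the $\dot{\w{H}}$-equations (\ref{m_eq12})--(\ref{m_eq10}), the integrability conditions (\ref{m_eq1})--(\ref{m_eq2}) and equations (\ref{m_eq3})--(\ref{m_eq8}), and propagating repeatedly along $\w{u}$ while substituting at each stage the expressions already obtained for $z_\alpha$, $j$ and $\partial_0\U_\alpha$, one is driven --- exactly as in the closing argument of the proof of Theorem \ref{th1} --- to a relation of the form $(p+\mu)\theta^{2}=0$, which is impossible. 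Hence $\U_3=0$. I expect this elimination, with its case distinctions according to the value of $p'$ (and to whether $\U_1=\U_2=0$), to be the main obstacle.

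\emph{Step 2: the residual structure.} Once $\U_3=\dot{u}_3=0$, the conservation laws give $\partial_3\mu=\partial_3 p=\partial_3\lambda=0$ and $z_3=\partial_3\theta=0$, $\b_3=0$; the commutator $[\partial_0,\partial_3]$ reduces to $-\tfrac13\theta\partial_3$, and $[\partial_\alpha,\partial_3]$ loses its $\partial_0$-component since $\w{\omega}\parallel\partial_3$. Specialising the algebraic equations to $\U_3=\b_3=0$ --- so that (\ref{m_eq7})--(\ref{m_eq8}) express $\E_{13}\theta$ and $\E_{23}\theta$ linearly in $(\Q_3-\R_3)\O\,\U_2$, $(\Q_3+\R_3)\O\,\U_1$ and basic functions, while (\ref{m_eq1})--(\ref{m_eq2}) give $\E_{13},\E_{23}$ in terms of $\Q_3,\R_3,\U_1,\U_2$ and basic functions (and $H_{13},H_{23}$ follow from (\ref{def_H})) --- one eliminates $\E_{13},\E_{23}$, combines with (\ref{m_eq9}), (\ref{m_eq10}) and (\ref{m_eq11}), and propagates once more along $\w{u}$ to obtain $\Q_3=\R_3=0$. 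By (\ref{convert_Q3R3}) this means $a_3=n_{12}=0$, so that, together with the standing tetrad fixing $n_{11}=n_{22}$, the commutation coefficients coupling $\partial_3$ to the $(1,2)$-plane reduce to $\gamma^3{}_{31}=q_1$, $\gamma^3{}_{32}=-r_2$, $\gamma^1{}_{32}=\gamma^2{}_{31}=-n_{11}$ and $\gamma^1{}_{31}=\gamma^2{}_{32}=0$.

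\emph{Step 3: the Killing vector.} Put $\xi=\lambda^{-1}g\,\partial_3$, where $g$ is the basic function fixed, up to a multiplicative constant on the material manifold $N$, by $\Z(g)=0$, $\X(\ln g)=\Q_1$, $\Y(\ln g)=-\R_2$, the integrability of this overdetermined system being a consequence of the basic equations of Appendix 2 and the relations established above. Since $\partial_0\lambda=-\tfrac13\lambda\theta$ and $g$ is basic, $[\partial_0,\xi]=\lambda^{-1}g\,\dot{u}_3\,\partial_0=0$; the $(0,a)$-components of the Killing equations then vanish since $[\partial_\alpha,\partial_3]$ carries no $\partial_0$-part; and the spatial components $g([\xi,\partial_\alpha],\partial_\beta)+g([\xi,\partial_\beta],\partial_\alpha)=0$ reduce, using $\partial_3\lambda=0$, the vanishing of $\gamma^1{}_{31}$ and $\gamma^2{}_{32}$, the gauge $n_{11}=n_{22}$, and the defining relations for $g$, to identities. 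Hence $\xi$ is a Killing vector field parallel to $\w{\omega}=\omega\,\partial_3$, which is the required Killing vector along the vorticity; the computations in this last step are lengthy but routine once $\dot{u}_3=0$ has been established.
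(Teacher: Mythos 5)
Your overall architecture coincides with the paper's: first force $\dot{u}_3=0$, then eliminate $q_3$ and $r_3$, then integrate a first-order system for the norm of the Killing field (your $K=\lambda^{-1}g$ with $\Z(g)=0$, $\X(\ln g)=\Q_1$, $\Y(\ln g)=-\R_2$ is, after unwinding the dictionary, exactly the paper's $\partial_0K=\tfrac{\theta}{3}K$, $\partial_1K=q_1K$, $\partial_2K=-r_2K$, $\partial_3K=0$). The problem is that the two steps carrying all the weight are asserted rather than proved, and the mechanisms you invoke are not the ones that actually close the argument. In Step 1 you correctly recover the constraint (\ref{th1_eq1}) from $\partial_0\theta=(G+2p'-1)\theta^2$, but then claim that ``propagating repeatedly along $\w{u}$'' through the algebraic equations drives one to $(p+\mu)\theta^2=0$. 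The actual exclusion of $\u_3\neq0$ does not proceed by time-propagation alone: one must act with $\partial_3$ on (\ref{th1_eq1}) to obtain (\ref{th1_eq3}), then with $\partial_1,\partial_2$ to obtain the pair (\ref{th1_eq4},\ref{th1_eq5}), and the whole point is that these (and their propagations) form homogeneous $2\times2$ linear systems in $z_A+\theta\u_A$, resp.\ $\u_A$, resp.\ $z_A$, whose coefficient matrices have determinant $a^2+b^2$ and are therefore nonsingular unless $a=b=0$, which is incompatible with (\ref{th1_eq3}) except for a linear equation of state, where Theorem \ref{th1} takes over. Nothing in your sketch produces these spatial-derivative systems, and without them there is no visible route to a contradiction.

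The same difficulty recurs, more seriously, in Step 2. Once $\u_3=z_3=0$, the relations $\partial_\alpha\u_3=0$ give (\ref{th1_eq7}--\ref{th1_eq8bis}), whose propagations are $p'r_3[(3G-2)\theta\u_1-(3p'+1)z_1]=0$ and $p'q_3[(3G-2)\theta\u_2-(3p'+1)z_2]=0$. The conclusion $q_3=r_3=0$ (equivalently $\Q_3=\R_3=0$) is \emph{not} reached by ``eliminating $\E_{13},\E_{23}$ and propagating once more'': it requires a trichotomy. If $r_3q_3\neq0$ one gets $\partial_\alpha(f(\mu)\theta)=0$, hence $\theta=\theta(\mu)$ or $\omega=0$, both excluded; if $q_3=r_3=0$ one is in the Killing case; and the mixed case $r_3\neq0=q_3$ must be ruled out by a separate, genuinely non-trivial argument (it yields $n=0$ and $\u_1=r_1$, and then uses the polynomial-with-basic-coefficients remark together with (\ref{m_eq2}) and (\ref{m_eq11}) to show $\lambda^{8}(p+\mu)^{-2}$ would be basic, forcing $\theta=0$). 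Your Step 2 asserts the endpoint without any of this case analysis, so the proposal has a genuine gap at precisely the place where the theorem's content lies; Step 3 is essentially correct and matches the paper's construction once $\u_3=q_3=r_3=0$ and $E_{13}=E_{23}=0$ are in hand.
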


\begin{proof} As in Theorem \ref{th1} one sees that $\partial_0(\U_3)=0$ is equivalent with $z_3+\theta \u_3=0$.
Propagating this along $\w{u}$ shows, using the evolution equations of Appendix 1, that $\u_3=0$ or (\ref{th1_eq1}) holds.\\

We first show that $\u_3\neq 0$ is inconsistent with $\omega \theta \neq 0$. 

Acting on (\ref{th1_eq1}) with $\partial_3$, one obtains in place of (\ref{th1_eq2}),
\begin{equation}\label{th1_eq3}
 G+p'-\frac{1}{3} +\frac{p+\mu}{3}G_p= 0.
\end{equation}
Using this to eliminate $G_p$ from the relations obtained by acting with $\partial_1$ and $\partial_2$ on (\ref{th1_eq1}),
one finds
\begin{eqnarray}
\fl  & & {\theta}^{2} \left( 6G+9{p'}-4 \right) {\u_1}-\tfrac{3}{2}\omega
\theta\left( 9G-2 \right) {\u_2}+\theta\left( 6G+9{
p'}-4 \right) {z_1}-\tfrac{3}{2}\omega\left( 1-9{p'} \right) {
z_2}=0 , \label{th1_eq4}\\
\fl & & \tfrac{3}{2}\omega\theta\left( 9G-2 \right) {\u_1}+{\theta}^{2}
 \left( 6G+9{p'}-4 \right) {\u_2}+\tfrac{3}{2}\omega\left( 1-9{
p'} \right) {z_1}+\theta\left( 6G+9{p'}-4
 \right) {z_2}=0. \label{th1_eq5}
\end{eqnarray}
In the case of a linear equation of state ($G+p'-\frac{1}{3}=0$) this becomes a homogeneous system
in the variables $z_1+\theta \u_1, z_2+\theta \u_2$, with a coefficient matrix which is
positive definite, unless $\theta(p'-2/3)=\omega(9p'-1)=0$ and hence we are done by Theorem \ref{th1}.
If there is no linear equation of state (in which case the $(\u_1, \u_2)$-coefficient matrix of (\ref{th1_eq4}, \ref{th1_eq5}) is positive definite), solving (\ref{th1_eq4}, \ref{th1_eq5}) for $\u_1,\u_2$ leads to expressions which are homogeneous in $z_1,z_2$.
Propagating (\ref{th1_eq4}, \ref{th1_eq5}) along $\w{u}$, one obtains a new homogeneous system $a z_1+b z_2=-b z_1+a z_2 =0$ with
\begin{eqnarray}
\fl a &=& -36\omega \theta (3G+3p'-1)^2(81 G^2+162G p'-96 G-72p'+28), \label{th1_eq6}\\
 \fl b &=& 6(3G+3p'-1)[4\theta^2(6G+9p'-4)(27 G^2-33G+81 G p'+54 p'^2-45 p'+10) \nonumber\\
\fl  && + 9\omega^2(9G-2)(3G+12p'-12).
\end{eqnarray}
Again the coefficient matrix is positive definite, as $a^2+b^2 = \partial_0 b = 0$ would lead
to an inconsistency with (\ref{th1_eq3}), unless we have a linear
equation of state. It follows that $z_1=z_2=0$, hence also $\u_1=\u_2=0$ and we are done by Theorem \ref{th1}.\\

Having excluded the case $\u_3\neq 0$, we now turn to the case where acceleration and vorticity are orthogonal:
$\u_3=0$ and hence also $z_3=0$. From the $\partial_\alpha \u_3=0$ equations one now obtains
\begin{eqnarray}
 E_{13} - r_3 \u_1 = 0,\label{th1_eq7}\\
 E_{23} +q_3 \u_2 = 0,\label{th1_eq8}\\
 E_{33}+\tfrac{1}{3}j+\tfrac{2}{3}\omega^2-\u_1q_1+\u_2r_2=0 , \label{th1_eq8bis}
\end{eqnarray}
with $\partial_0$(\ref{th1_eq7},\ref{th1_eq8}) leading to two further equations,
\begin{eqnarray}
 p' r_3[(3G-2)\theta \u_1-(3p'+1)z_1)]=0,\label{th1_eq9}\\
 p' q_3[(3G-2)\theta \u_2-(3p'+1)z_2)]=0. \label{th1_eq10}
\end{eqnarray}
Clearly $r_3q_3 \neq 0$ implies the existence of a function $f(\mu)$, such that $\partial_\alpha(f(\mu) \theta)=0$ and hence either $\w{u}$ is hypersurface orthogonal ($\omega=0$) or $\theta=\theta(\mu)$ (and then again $\omega \theta =0$).

It follows that we can restrict to the cases $r_3\neq 0=q_3$ or $r_3=0\neq q_3$ (which are equivalent under a discrete rotation) and the case $q_3=r_3=0$.
The latter is easy: by (\ref{th1_eq7},\ref{th1_eq8}) we have $E_{13}=E_{23}=0$, with $j$ given by (\ref{th1_eq8bis}).

One can verify that herewith the $\partial_3$ derivatives of all invariants vanish, implying the existence of a Killing
vector $K \partial_3$ along the vorticity.
Alternatively one can explicitly verify the existence of this Killing
vector, by showing that the Killing equations $k_{(a;b)}=0$, with
$\w{k} = K \partial_3$, form an integrable set. The Killing equations are given in explicit form by
\begin{equation}
\partial_0 K-\frac{\theta}{3}K = \partial_1 K - q_1 K =
\partial_2 K + r_2 K = \partial_3 K = 0
\end{equation}
and acting on $K$ with the commutators (\ref{comm_explicit}) shows that the resulting integrability conditions are identically satisfied:
except for $[\partial_1,\, \partial_2] K$, this is an immediate consequence of the equations (\ref{e0r},\ref{e0q},\ref{divH1},\ref{divH2}), while
for $[\partial_1,\, \partial_2] K$ one has to use equations (\ref{e2q1},\ref{e1r2}).

It remains to show the inconsistency of (for example) the case $r_3\neq 0=q_3$: taking a $\partial_3$ derivative of (\ref{th1_eq9})
(with $\partial_3 G =0$) and expressing that
$\partial_1 q_3=0$, one obtains $n=0$ and $\u_1-r_1=0$. In terms of the basic variables introduced in section 2 we have then the following restrictions:
\begin{eqnarray}
\N=\b_3 = \Q_3+\R_3=\E_{23}=\E_{13}+6 \R_1 \R_3=0, \\
\U_1= \frac{3 \R_1}{1+3p'},
\end{eqnarray}
with $\R_3\neq 0$. Herewith the algebraic equation (\ref{m_eq2}) can be written as
\begin{equation}
 8 \O \R_1 \R_3 (9p'-1) +\B_2(3p'+1)=0, \label{th1_eq15}
\end{equation}
implying that $p'$ is basic (and hence $\theta=0$, unless $p'$ is constant), or that $\R_1=\B_2=0$. In the latter case
 (\ref{m_eq11}) reads $6\O^2 \R_3 \lambda^8+\B_{11}(p+\mu)^2=0$, implying that $\lambda^{8}(p+\mu)^{-2}$ is a basic function and hence $\theta=0$. On the other hand, in the case when $p'$ is constant (linear equation of state) and $\R_1 \neq 0$, equation (\ref{m_eq11}) reduces to
\begin{equation}
6\O^2 \R_3  \frac{\lambda^8}{(p+\mu)^2} + \B_{11} +\frac{54 p'^2}{(3p'+1)^2}\R_1^2\R_3=0
\end{equation}
and again we see that $\lambda^{8}(p+\mu)^{-2}$ is basic, so $\theta=0$.
\end{proof}

As the previous theorem applies in particular to the case $\U_3=0$ ($\dot{\w{u}}$ orthogonal to $\w{\omega}$) and as, vice versa, the existence
of a Killing vector along the vorticity automatically implies $\U_3=0$, we also obtain the following corollary:
\begin{co}\label{co1}
For a rotating and expanding shear-free perfect fluid with a barotropic equation of state, the acceleration is orthogonal to the vorticity if and only if a Killing vector exists along the vorticity.
\end{co}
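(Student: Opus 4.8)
The plan is to establish the two implications separately. Neither requires fresh computation: one direction is an immediate application of Theorem~\ref{th2}, the other follows from a single line of tetrad algebra, so I do not anticipate a real obstacle.

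First I would handle the implication ``acceleration orthogonal to vorticity $\Rightarrow$ Killing vector''. If $\dot{\w{u}}$ is orthogonal to $\w{\omega}$ then, in the adapted triad where $\w{\omega}=\omega\,\partial_3$ with $\omega\neq 0$, this says $\dot{u}_3=0$, hence the rescaled variable $\U_3=\dot{u}_3/(\lambda p')$ vanishes identically (recall $p'\neq 0$). Since the identically zero function is trivially conserved along the flow, $\U_3$ is then basic in the sense of the Definition of Section~3, and Theorem~\ref{th2} applies verbatim to the rotating and expanding shear-free fluid, producing a Killing vector along the vorticity.

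Conversely, for ``Killing vector $\Rightarrow$ acceleration orthogonal to vorticity'', I would write the Killing field as $\w{k}=K\,\partial_3$ (permissible since it is aligned with $\w{\omega}=\omega\,\partial_3$), note that $k_a u^a=0$ because $\w{u}=\partial_0\perp\partial_3$, and contract the Killing equation $k_{(a;b)}=0$ with $u^a u^b$. Using $u^b\nabla_b(k_a u^a)=u^a u^b\nabla_b k_a+k_a\dot{u}^a$ together with $k_a u^a=0$, this reduces to $0=k_{(a;b)}u^a u^b=-k_a\dot{u}^a=-K\dot{u}_3$, so that $\dot{u}_3=0$, i.e. $\dot{\w{u}}$ is orthogonal to $\w{\omega}$.

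The only matters needing a word of care are bookkeeping ones: that ``along the vorticity'' forces $K$ not to vanish identically, so $K\dot{u}_3=0$ really yields $\dot{u}_3=0$, and that the identically vanishing function qualifies as basic, so the hypothesis of Theorem~\ref{th2} is met with no extra effort. Beyond that the corollary is essentially a repackaging of the observation made just above its statement, once the elementary identity $k_{(a;b)}u^a u^b=-K\dot{u}_3$ has been recorded; there is no substantive difficulty to overcome.
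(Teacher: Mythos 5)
Your proposal is correct and follows essentially the same route as the paper, which obtains the forward implication by noting that $\dot u_3=0$ makes $\U_3$ (trivially) basic so that Theorem \ref{th2} applies, and dismisses the converse as ``automatic''. You have merely written out the one-line contraction $k_{(a;b)}u^au^b=-K\dot u_3$ that the paper leaves implicit; no difference in substance.
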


\section{Linear equation of state}

We first demonstrate in Lemma \ref{Theorem3} that for a linear equation of state the vanishing of the basic variables $\E_{13},\E_{23}$ and $\Q_3$ implies the existence
of a Killing vector along the vorticity. In Theorem \ref{Theorem4} we show that for a linear equation of state the conjecture holds true, unless
the conditions for Lemma \ref{Theorem3} are satisfied.
The final `elusive case' \cite{CollinsKV}, in which there is
a Killing vector along the vorticity, is then dealt with in Theorem \ref{Theorem5}.

Recall first an observation which will be helpful in the sequel.
\begin{re}[\cite{Slobodeanu2014}]\label{pol}
If a function $f$ on $M$ satisfies $\alpha_n f^n + ... + \alpha_1 f + \alpha_0 =0$, where $n \in \mathbb{N}$ and $\alpha_i$'s are all basic functions, then either $f$ is basic or $\alpha_i=0$ for all $i=0,1, ..., n$.
\end{re}

\begin{lm}\label{Theorem3}
If for a rotating and expanding shear-free perfect fluid, obeying a linear equation of state 
$p=(\gamma-1)\mu + constant$, the basic variables $\E_{13},\E_{23}$ and $\Q_3$ vanish, then a Killing vector exists along the vorticity.
\end{lm}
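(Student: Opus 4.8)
The plan is to reduce the claim to Theorem \ref{th2}: it suffices to show that, under the stated hypotheses and on the region where $\omega\theta\neq0$, the acceleration component $\U_3$ is a basic function, for then Theorem \ref{th2} produces the Killing vector along the vorticity directly. (Equivalently, by Corollary \ref{co1}, one could instead aim at $\U_3=0$.) I therefore argue by contradiction, assuming in addition that $\U_3$ is \emph{not} basic, which forces $\U_3\not\equiv0$ since $\U_3\equiv0$ is trivially basic. For a linear equation of state $p'$ is constant, so $G=\tfrac13-p'$, $G'=G_p=0$, all the $\h_i$ of Section 3 are constants, and the coefficient functions occurring below are genuine basic functions.

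\emph{Step 1: collapsing the electric Weyl tensor.} Substituting $\E_{13}=\E_{23}=\Q_3=0$ into (\ref{m_eq11}) leaves $\tfrac{3p'^2}{3p'+1}\,\E_0\,\U_3+\B_{11}=0$, a polynomial of degree one in the non-basic quantity $\U_3$ with basic coefficients; since $\U_3$ is not basic, Remark \ref{pol} forces $\E_0=0$ and $\B_{11}=0$. In the same way (\ref{m_eq9}) reduces to $\tfrac{12p'^2}{3p'+1}\,\E_{12}\,\U_3-(\B_9-\B_{10})=0$, so that $\E_{12}=0$ and $\B_9=\B_{10}$. Hence $\E_0=\E_{12}=\E_{13}=\E_{23}=0$, and by (\ref{convert_E123})--(\ref{convert_E0}) the off-diagonal components of the electric Weyl tensor and $E_{11}-E_{22}$ are now expressed purely through the $\U_\alpha$ (for instance $E_{13}=-\tfrac{\lambda^2p'}{3p'+1}G\,\U_1\U_3$). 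Moreover $\Q_3=0$ together with (\ref{convert_Q3R3}) gives $q_3=-r_3$ and $\R_3=-\tfrac13\U_3-r_3/\lambda$, so that we are in case (\ref{extra_rot}) and a further basic rotation in the $(1,2)$-plane is still available.

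\emph{Step 2: the cascade.} I insert all of the above into the remaining algebraic relations (\ref{m_eq1})--(\ref{m_eq8}), (\ref{m_eq10}), (\ref{m_eq12})--(\ref{m_eq13}), clear the nowhere-vanishing prefactors $\lambda$ and $p+\mu$, and use the evolution equations (\ref{utheta})--(\ref{uU3}) together with (\ref{convert_J}) to remove $\partial_0$-derivatives and lower the degree in $\theta$. Each resulting identity is then a polynomial in the non-basic unknowns $\theta,\U_1,\U_2,\U_3$ with basic coefficients. Using the residual rotation to impose one further normalisation, for instance $\b_2=0$ (legitimate because $\b_1+i\b_2$ has rotation weight one and $\b_1,\b_2$ are basic, so the required angle is basic), I then apply Remark \ref{pol} successively to the identities in which $\U_3$ --- or $\theta$ --- survives as the only non-basic quantity, obtaining a chain of vanishing conditions on the basic objects $\O,\b_\alpha,\R_3,\E_3,\J$ and the $\B_i$. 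Since $\U_3$ is assumed non-basic, this chain is never short-circuited by ``$\U_3$ basic'', and can terminate only at $\O=0$ (so $\omega=0$) or $\theta=0$, contradicting $\omega\theta\neq0$. In the complementary branch, where every basic identity in the chain does hold, one reads off that $\partial_3$ annihilates all curvature invariants and all connection coefficients, and then --- exactly as in the $q_3=r_3=0$ case of the proof of Theorem \ref{th2} --- the Killing equations $\partial_0K-\tfrac{\theta}{3}K=\partial_1K-q_1K=\partial_2K+r_2K=\partial_3K=0$ for $\w{k}=K\partial_3$ form an integrable system, yielding the desired Killing vector.

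\emph{Expected main obstacle.} The real work is Step 2: organising the case split so that every branch closes, and in particular treating the degenerate sub-case $\U_1=\U_2=0$ (acceleration parallel to the vorticity, already covered by \cite{WhiteCollins,SenSopSze} but to be acknowledged) together with any branch in which a factor of $\O$ or a leading $\B_i$ fails to cancel cleanly. One must also keep the excluded values $p'\in\{0,\pm\tfrac13,\tfrac19\}$ in view whenever dividing by $p'$-dependent factors, and verify that the extra basic rotation used in Step 2 is compatible with $\partial_0\alpha=0$ in the light of the evolution equations.
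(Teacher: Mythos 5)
Your Step 1 is correct and coincides with the paper's own opening moves: assuming $\U_3$ is not basic, equations (\ref{m_eq11}) and (\ref{m_eq9}) together with Remark \ref{pol} force $\E_0=\B_{11}=0$, $\E_{12}=0$ and $\B_9=\B_{10}$, and the residual rotation freedom survives. The gap is in Step 2, which is a plan rather than a proof, and whose two central claims are unsupported. First, Remark \ref{pol} applies to a polynomial in a \emph{single} non-basic function with basic coefficients; your identities are polynomials in the four non-basic quantities $\theta,\U_1,\U_2,\U_3$ jointly, and a multivariate identity of this kind does not force its coefficients to vanish (the non-basic functions may be algebraically dependent over the basic ones --- indeed the whole point of the governing system is that they are). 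The missing idea that makes the cascade run is to first solve the linear system (\ref{m_eq1},\ref{m_eq2}) for $\U_1,\U_2$ as rational functions of $\U_3$ with basic coefficients --- its determinant (\ref{dete}) is a polynomial in $\U_3$ whose leading basic coefficient is non-zero, hence the determinant itself is non-zero by Remark \ref{pol} --- and then to eliminate $\theta$ and the powers of $\lambda$ by treating suitable pairs such as (\ref{m_eq6},\ref{m_eq7}) together with the simplified form of (\ref{m_eq10}) as linear systems in $\lambda^3\theta(p+\mu)^{-1}$ and $\lambda^8(p+\mu)^{-2}$ and taking resultants. Only after these two eliminations does one obtain genuine univariate polynomials in $\U_3$ to which Remark \ref{pol} applies.

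Second, your claimed termination is wrong in both branches. The cascade does not end at $\O=0$ or $\theta=0$: it ends by showing that $\U_1$ and $\U_2$ are basic, whence Theorem \ref{th1} gives $\omega\theta=0$ and the contradiction; and there is no ``complementary branch'' in which one verifies the Killing equations directly --- under the standing assumption that $\U_3$ is not basic every branch must close in a contradiction, and the Killing vector is produced only afterwards, by applying Theorem \ref{th2} to the now-established fact that $\U_3$ is basic. Your sketch also omits the special values $p'=-\tfrac{1}{6}$ (where $\b_3$ may vanish, the elimination above degenerates, and one must instead extract from (\ref{m_eq4},\ref{m_eq5}) a quartic in $\U_1$ or $\U_2$ with leading coefficient $\O^3$) and $p'=-\tfrac{1}{5}$ (where the determination of $\b_3$ from the cubic coefficient conditions fails and one must propagate (\ref{m_eq1}) once more). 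These are precisely the branches where, in your words, ``a factor fails to cancel cleanly'', and they need separate arguments rather than acknowledgment.
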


\begin{proof}
Let us assume that $\U_3$ is not basic, as otherwise Theorem \ref{th2} applies. Since the equation of state is linear, the determinant of the linear system (\ref{m_eq1},\ref{m_eq2}) in $\U_1$, $\U_2$
\begin{equation}\label{dete}
D=\O^2(9p'-1)^2\left[(3p'-1)^2 \U_3^2 -6 \R_3 (9p'^2-1) \U_3 + \textrm{basic terms}\right],
\end{equation}
has basic coefficients, so can be assumed to be  non-zero due to Remark \ref{pol}. Solving this system for $\U_1,\U_2$ yields rational expressions in $\U_3$ with basic coefficients; this allows us in the following to obtain various equations only in terms of $\U_3$.

Since by hypothesis $\Q_3=0$  we have $n_{12}=n_{11}-n_{22}=0$  and we are free to choose a basic rotation making for example $\E_{12}=0$.
Together with $\E_{13}=\E_{23}=0$ and the conditions for a linear equation of state ($G'=G+p'-\smfrac{1}{3}=0$) the equations of the previous section
simplify considerably. In particular one obtains from (\ref{m_eq11}, \ref{m_eq9}) $\B_{10}=\B_9$ and
$ \frac{3p'^2}{3p'+1}\E_0 \U_3 + \B_{11}=0$.
By Theorem \ref{th2} this implies the existence of a Killing vector along the vorticity, unless $\E_0=\B_{11}=0$. Since $\E_{12}$ and $\E_0$ are both zero, a further basic rotation may be taken (cf. Remark \ref{Einvar}), making $\b_1 = \b_2$.

By (\ref{basic_eq24}) we have then $\B_9=\B_{10}= -3 \b_3\E_3 / (8 \O)$,
such that (\ref{m_eq10}) simplifies to
\begin{equation}\label{specsimp}
\fl \ \frac{\lambda^3 \theta}{p+\mu}\left((6 p^\prime + 1)(3 p' - 1)\U_3 - 3 (3\,{p'}+1 ) {\R_3}\right)-
\frac{27\lambda^8}{4(p+\mu)^2}(p'+1) \b_3 + \frac{9\E_3 \b_3}{8\O^2}=0
\end{equation}
showing that $\b_3 \neq 0$ unless $p'=-1/6$.

\medskip
We see Equations (\ref{m_eq6},\ref{m_eq7},\ref{specsimp}) as an algebraic system in the variables $\lambda^3 \theta(p+\mu)^{-1}$ and $\lambda^8 (p+\mu)^{-2}$; by eliminating 
the first variable from (\ref{specsimp}) and (\ref{m_eq7}), then from (\ref{specsimp}) and (\ref{m_eq6}), and finally taking the 
resultant of the two relations with respect to the second variable, we obtain a compatibility condition in the form of a polynomial equation 
in $\U_3$ with basic coefficients. But then Remark \ref{pol} requires that the leading coefficient is vanishing; assuming $p' \neq -1/6$ this is 
equivalent to
$$
\R_2 = \frac{9{p^\prime}^2-6p^\prime - 1}{2(3p^\prime+1)(9p^\prime-1)}\frac{\b_1}{\O}.
$$
Analogously, repeating the argument with (\ref{m_eq8}) in the place of (\ref{m_eq7}), we get $\R_2 = \Q_1$.

By propagating the equations (\ref{m_eq1},\ref{m_eq2}) we obtain a homogeneous system in $\theta$ and $\lambda^5 (p+\mu)^{-1}$, whose necessarily vanishing determinant leads us to a third degree polynomial equation in $\U_3$ with basic coefficients. Again (cf. Remark \ref{pol}) this requires the cancellation of every coefficient. This shows us that $\b_1=\b_2=0$ would imply  $\U_1=\U_2=0$ (a contradiction, cf. Theorem \ref{th1}), so we may assume $\b_1 \neq 0$. The vanishing of the leading coefficient yields a formula for $\R_3$, which we substitute in the second degree coefficient, from which we obtain two possible expressions of $\b_3$ (in terms of other basic quantities), unless $5p^\prime + 1=0$.

If $5p^\prime + 1 \neq 0$, then substituting each of the two expressions of $\b_3$ (and after taking into account further conditions 
arising from the cancellation of lower degree coefficients of the basic polynomial) shows 
that $\U_1, \U_2$ are basic, so Theorem \ref{th1} applies.

If $5p^\prime + 1 = 0$, then $\partial_0$(\ref{m_eq1}) and (\ref{m_eq7}) form another homogeneous system in $\theta$ and $\lambda^5 (p+\mu)^{-1}$, the necessarily vanishing determinant of which leads us to a new polynomial equation in $\U_3$ with basic coefficients for which Remark \ref{pol} applies. Again by cumulating step by step the constraints issued from the cancellation of various coefficients, we are led finally to the same outcome: $\U_1, \U_2$ should be basic and Theorem \ref{th1} applies.

\medskip
When $p'=-1/6$ the above formulae for $\R_2$ and $\Q_1$ no longer hold (so neither do the subsequent considerations), but now equations (\ref{m_eq4},\, \ref{m_eq5}) reduce to
\begin{eqnarray}
 (\O \R_2+\b_1)\U_1-(\O\Q_1+\b_2) \U_2-2 \O \U_1\U_2+\smfrac{3}{5}\B_4=0,\\
 \O(\U_2^2-\U_1^2)-(\O\Q_1+\b_2)\U_1-(\O\R_2+\b_1)\U_2+\smfrac{3}{5}\B_5=0
\end{eqnarray}
and elimination of $\U_1$ or $\U_2$ results in a fourth degree polynomial relation for $\U_2$ or $\U_1$, with basic coefficients and leading coefficient $\O^3$. It follows that $\U_1$ and $\U_2$ are basic and we are done by Theorem \ref{th1}.
\end{proof}

Full details of the previous proof can be found in Maple or Mathematica worksheets, which are available from the authors.

The $p'\neq -1/6$ part of the above proof follows closely the $\widetilde{b}_3=\widetilde{c}_3=0$ case in the proof given in Section 5 of \cite{Slobodeanu2014}, which is independent of whether the cosmological constant vanishes or not and which left aside the exceptional cases $p' \in \{-1/6, -1/5\}$.

\begin{te}\label{Theorem4}
 If a rotating and expanding shear-free perfect fluid obeys a linear equation of state $p=(\gamma-1)\mu + constant$, then a Killing vector exists along the vorticity.
\end{te}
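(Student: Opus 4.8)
The plan is to reduce, using the two criteria already in hand, to a single awkward configuration and then to show that configuration cannot occur. Assume throughout that the fluid is rotating and expanding, so $\omega\theta\neq 0$ (otherwise there is nothing to prove). By Corollary~\ref{co1} the conclusion we want is equivalent to $\U_3=0$; by Theorem~\ref{th2} it holds as soon as $\U_3$ is basic; and by Lemma~\ref{Theorem3} it holds as soon as $\E_{13}=\E_{23}=\Q_3=0$. It therefore suffices to rule out the remaining case --- $\U_3$ \emph{not} basic, together with $(\E_{13},\E_{23},\Q_3)\neq(0,0,0)$ --- by showing that it forces one of the following: $\E_{13}=\E_{23}=\Q_3=0$; $\U_3$ basic; or $\U_1$ and $\U_2$ both basic. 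The first two contradict the standing hypotheses of this case, while the last gives $\omega\theta=0$ by Theorem~\ref{th1}, contradicting rotation-and-expansion. One begins by inserting the linear-equation-of-state relations $G=\tfrac13-p'$, $G'=G_p=0$ into the equations of Section~3, which in particular removes the $G'$ and $G_p$ terms from $\h_4$ and considerably trims the other coefficients; one keeps in mind $p'\notin\{0,\pm\tfrac13,\tfrac19\}$ and that the values $p'=-\tfrac15$ and $p'=-\tfrac16$ will foreseeably require individual treatment, as they did in \cite{Slobodeanu2014} and in Lemma~\ref{Theorem3}.

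The next step is to eliminate $\U_1$ and $\U_2$. Read (\ref{m_eq1}) and (\ref{m_eq2}) as a linear system for $(\U_1,\U_2)$ with $\U_3$ a parameter; its determinant is the quantity $D$ of (\ref{dete}), whose coefficients as a polynomial in $\U_3$ are basic and whose leading coefficient $\O^2(9p'-1)^2(3p'-1)^2$ is nonzero because $\omega\neq 0$ and $p'\neq\tfrac19,\tfrac13$. Since $\U_3$ is not basic, Remark~\ref{pol} forbids $D\equiv 0$, so $D$ may be taken nonzero and $\U_1,\U_2$ solved for as rational functions of $\U_3$ with basic coefficients and denominator $D$ --- exactly the manoeuvre that opens the proof of Lemma~\ref{Theorem3}. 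Substituting these expressions into the $\dot{\w{H}}$ equations (\ref{m_eq12})--(\ref{m_eq10}), into the relations (\ref{m_eq3})--(\ref{m_eq8}), and into the equations obtained by propagating (\ref{m_eq1}) and (\ref{m_eq2}) along $\w{u}$ (a device used repeatedly in Lemma~\ref{Theorem3}), and clearing denominators, turns each of these into a polynomial equation in $\U_3$ and in the non-basic blocks $\lambda^3\theta(p+\mu)^{-1}$, $\lambda^5(p+\mu)^{-1}$, $\lambda^8(p+\mu)^{-2}$ and $\lambda^{-2}\theta^2$, all with basic coefficients.

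It then remains to clear the residual non-basic data. Putting $x=\lambda^3\theta(p+\mu)^{-1}$ and $y=\lambda^8(p+\mu)^{-2}$ --- so that $\lambda^{-2}\theta^2=x^2/y$ and $(\lambda^5(p+\mu)^{-1})^2=\lambda^2 y$, with $y\neq 0$ --- one treats $x$ and $y$ as two further unknowns, functionally independent on $M$ unless $\theta=\theta(\mu)$, which is an already settled case. Eliminating $x$ and then $y$ by resultants, precisely as the triples $(\ref{m_eq6}),(\ref{m_eq7}),(\ref{specsimp})$ are handled in Lemma~\ref{Theorem3}, produces polynomial identities in $\U_3$ alone with basic coefficients; since $\U_3$ is not basic, Remark~\ref{pol} forces every coefficient to vanish, yielding a cascade of algebraic relations among $\O,\b_\alpha,\Q_\alpha,\R_\alpha,\E_0,\E_3,\E_{\alpha\beta},\N$ and the $\B_i$ (the latter further simplified through the purely basic equations of Appendix~2). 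Working through the resulting case tree --- the generic branch, the exceptional branches $6p'+1=0$ and $5p'+1=0$, and within each the sub-alternatives $\b_1=\b_2=0$ (which, as in Lemma~\ref{Theorem3}, forces $\U_1=\U_2=0$) versus $\b_1\b_2\neq 0$ --- one arrives in every branch at one of the three outcomes listed above, so the awkward case is empty and a Killing vector along $\w{\omega}$ always exists. As with Lemma~\ref{Theorem3}, the bulk of this is best certified in an accompanying computer-algebra worksheet.

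The main obstacle is the resultant elimination of the last step. After the rational substitutions for $\U_1,\U_2$ the polynomials in $\U_3$ are of high degree with bulky basic coefficients, the branch structure is genuinely intricate and changes character at $p'=-\tfrac15$ and $p'=-\tfrac16$, and --- the truly delicate point, on which \cite{Carminati2015} came to grief --- the eliminations must be performed with full parametric bookkeeping rather than by a black-box \texttt{solve} command, lest a parameter-dependent branch be silently lost. A subsidiary but essential point is that the equations obtained by propagating (\ref{m_eq1}) and (\ref{m_eq2}) along $\w{u}$ are indispensable: without them the algebraic system in $\U_3$ and the basic unknowns stays under-determined and the coefficient cascade fails to close.
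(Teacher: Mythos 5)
Your reduction framework is the right one and matches the paper's: assuming $\omega\theta\neq 0$, it suffices to land in one of three outcomes ($\U_3$ basic, handled by Theorem \ref{th2}; $\E_{13}=\E_{23}=\Q_3=0$, handled by Lemma \ref{Theorem3}; or $\U_1,\U_2$ basic, handled by Theorem \ref{th1}), and the opening move of solving (\ref{m_eq1},\ref{m_eq2}) for $\U_1,\U_2$ as rational functions of $\U_3$ with basic coefficients, with the non-degeneracy of $D$ in (\ref{dete}) secured by Remark \ref{pol}, is exactly the paper's. But from that point on your proposal replaces the decisive computations by an unverified assertion that a large resultant elimination ``arrives in every branch at one of the three outcomes.'' The concrete gap is the mechanism that forces $\Q_3=0$: the paper obtains it by evaluating the \emph{spatial} integrability condition $\partial_1\U_1-\partial_2\U_2$ on the rational expressions for $\U_1,\U_2$ (as in Lemma 4 of \cite{Slobodeanu2014}), which yields a degree-7 polynomial in $\U_3$ with basic coefficients whose \emph{leading} term is $\O^7\Q_3(3p'+1)(3p'-1)^6(9p'-1)^6\U_3^7$; Remark \ref{pol} then gives $\Q_3=0$ outright. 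Your list of equations to be substituted into --- the $\dot{\w{H}}$ equations, (\ref{m_eq3})--(\ref{m_eq8}), and the $\partial_0$-propagations of (\ref{m_eq1},\ref{m_eq2}) --- does not include this spatial-derivative identity, and nothing in your scheme is shown to produce a constraint with $\Q_3$ as an isolated leading coefficient. Without that, the case $(\E_{13},\E_{23},\Q_3)\neq(0,0,0)$ is not actually closed.

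The remainder of the paper's argument is likewise specific rather than a generic elimination: once $\Q_3=0$ and a basic rotation sets $\E_{12}=0$, equations (\ref{m_eq11},\ref{m_eq9}) become a $2\times 2$ linear system for $\E_{13}\U_1-\E_{23}\U_2$ and $\E_{23}\U_1+\E_{13}\U_2$ with determinant $\E_{13}^2+\E_{23}^2$; if this vanishes Lemma \ref{Theorem3} applies, otherwise $\U_1,\U_2$ come out \emph{linear} in $\U_3$, and back-substitution into (\ref{m_eq1},\ref{m_eq2}) gives quadratics in $\U_3$ whose leading coefficients are proportional to $\O\E_0\E_{A3}$, forcing either $\U_3$ basic or $\E_0=0$ (whence $\U_1,\U_2$ basic). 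Note also that the branch structure you import ($6p'+1=0$, $5p'+1=0$, $\b_1=\b_2=0$) belongs to the proof of Lemma \ref{Theorem3}, not to this theorem; the paper's proof of Theorem \ref{Theorem4} has no such case tree. As written, your proposal is a plausible programme whose success is exactly what needs to be demonstrated, and it omits the one identity that makes the demonstration go through.
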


\begin{proof}
For a linear equation of state the determinant of the linear system (\ref{m_eq1},\ref{m_eq2}) in $\U_1$, $\U_2$ is given by (\ref{dete}) and hence can be assumed to be  non-zero, unless $\U_3$ is basic (in which case Theorem \ref{th2} applies). Solving this system for $\U_1,\U_2$ and proceeding as in Lemma 4 of \cite{Slobodeanu2014} by evaluating $\partial_1 \U_1 -\partial_2 \U_2$, we
obtain a polynomial equation of degree 7 in $\U_3$, containing only basic coefficients and with leading term
$\O^7 \Q_3 (3p'+1)(3p'-1)^6(9p'-1)^6\U_3^7$.
Since we assume $\omega \theta \neq 0$, it follows that $\U_3$ is basic (hence we are done by Theorem \ref{th2}),
or that $\Q_3=0$. In the latter case we can choose a basic rotation making $\E_{12}=0$, under which equations (\ref{m_eq11}, \ref{m_eq9}) get simplified respectively to
\begin{eqnarray}\label{Th4_U1U2}
 \E_{13}\U_1-\E_{23}\U_2= \E_0 \U_3 + \B_{11}\frac{3p'+1}{3p'^2},\nonumber \\
 \E_{23}\U_1+\E_{13}\U_2= (\B_{10}-\B_9)\frac{3p'+1}{6p'^2}.
\end{eqnarray}
Unless the determinant of this system vanishes (in which case Lemma \ref{Theorem3} applies), we can solve (\ref{Th4_U1U2}) to obtain
expressions for $\U_1,\U_2$ which are linear in $\U_3$. Subsituting these in equations (\ref{m_eq1},\ref{m_eq2}) leads to two
quadratic equations in $\U_3$, with basic coefficients and with leading terms respectively
\begin{equation}
 \O \E_0 \frac{\E_{A3}}{\E_{13}^2+\E_{23}^2}\frac{(9p'-1)(3p'-1)}{3p'+1} \U_3^2, \ \ \ (A=1,2).
\end{equation}
It follows that $\U_3$ is basic (and we are done by Theorem \ref{th2}), unless $\E_0=0$, in which case equations (\ref{Th4_U1U2}) show that $\U_1$ and $\U_2$ are basic and we are done by Theorem \ref{th1}.
\end{proof}

\begin{te}\label{Theorem5}
If a shear-free perfect fluid obeys a linear equation of state $p=(\gamma-1)\mu + constant$ and if a Killing vector exists along the vorticity, then $\omega \theta=0$.
\end{te}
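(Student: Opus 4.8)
\textbf{Proof plan for Theorem \ref{Theorem5}.}

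The plan is to exploit the dramatic simplification that follows from the existence of a Killing vector $\w{k}=K\partial_3$ along the vorticity. As noted in Corollary \ref{co1} and the proof of Theorem \ref{th2}, such a Killing vector forces $\U_3=0$ (acceleration orthogonal to vorticity) and hence $z_3=0$, and moreover the $\partial_3$-derivative of every scalar invariant vanishes. This already puts us squarely in the $q_3=r_3=0$ branch analysed in Theorem \ref{th2}: if instead $q_3r_3\neq 0$ one gets $\theta=\theta(\mu)$ or $\omega=0$ directly, and the mixed cases $r_3\neq0=q_3$ (or vice versa) were shown there to force $\theta=0$ for a linear equation of state. So the first step is to record that we may assume $\U_3=z_3=q_3=r_3=0$ together with $\partial_3(\text{anything basic})=0$, and that $\w{\omega}=\omega\partial_3$ with $\partial_3\omega=0$; in basic-variable language, $\R_3=\Q_3+\R_3=0$, so $\Q_3=\R_3=0$, and $\E_{13}=E_{13}$-type relations collapse via Remark~1 and (\ref{th1_eq7},\ref{th1_eq8}) to $\E_{13}=\E_{23}=0$.

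The second step is to feed these conditions into the algebraic system (\ref{m_eq1})--(\ref{m_eq12}) of Section 3, specialised to a linear equation of state ($G'=0$, $G+p'-\tfrac13=0$, so $p'$ is a constant outside the already-excluded values). With $\U_3=\Q_3=\R_3=\E_{13}=\E_{23}=0$, most of the $\dot{\w{H}}$-equations (\ref{m_eq9})--(\ref{m_eq12}) and the $\partial_A\partial_3\theta$-equations (\ref{m_eq1},\ref{m_eq2}) lose their $\U_3$-dependence entirely and become relations among $\U_1,\U_2,\theta$ and basic quantities only. The remaining genuine content lives in (\ref{m_eq3})--(\ref{m_eq8}). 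I would then assume $\omega\theta\neq 0$ for contradiction and use the now-reduced equations, together with $\partial_0$ of the surviving relations (employing the evolution equations (\ref{uU1}),(\ref{uU2}) for $\U_1,\U_2$ and (\ref{utheta}) for $\theta$), to extract homogeneous linear systems in $(\U_1,\U_2)$ or in $(\theta,\lambda^5(p+\mu)^{-1})$ whose coefficient matrices are built from basic functions. Positive-definiteness of such a matrix would force $\U_1=\U_2=0$ (hence Theorem \ref{th1}) or $\theta=0$; the degenerate alternatives will, as in Theorems \ref{th1}--\ref{th2} and Lemma \ref{Theorem3}, reduce to the statement that some ratio such as $\lambda^{8}(p+\mu)^{-2}$ or $p'$ itself is basic, which for a non-trivial linear equation of state again gives $\theta=0$.

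The main obstacle I anticipate is that, precisely because the Killing vector removes the $\partial_3$-direction, the system becomes \emph{under-determined} in the sense flagged in the introduction (the ``exceptionally elusive'' case of Collins): with $\U_3=q_3=r_3=n_{12}=n_{11}-n_{22}=0$ we have lost the equations that in the generic case let one solve for $\U_1,\U_2$ rationally in terms of a single scalar. One must therefore work harder to close the argument — likely by bringing in the purely basic equations of Appendix 2 and the $\X,\Y$-derivatives (\ref{XYZO}) of $\O$, and by carefully tracking the low-order coefficients of the basic polynomials produced (as in the $5p'+1=0$ detour of Lemma \ref{Theorem3}), rather than relying solely on leading-coefficient arguments. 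A secondary technical nuisance is the proliferation of sporadic special values of $p'$ (beyond the standing exclusions $p'\in\{0,\pm\tfrac13,\tfrac19\}$, values like $p'=-\tfrac16,-\tfrac15$ surfaced already in Lemma \ref{Theorem3}); each such branch must be checked separately, most efficiently with the computer-algebra worksheets the authors mention, but the structural conclusion — $\omega\theta=0$ — should be uniform across all of them.
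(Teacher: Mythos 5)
Your first step matches the paper: the Killing vector gives $\dot{u}_3=z_3=q_3=r_3=E_{13}=E_{23}=0$, hence $\b_3=\Q_3=\R_3=\E_{13}=\E_{23}=0$, together with the algebraic relation inherited from (\ref{th1_eq8bis}) and a residual basic rotation in the $(1,2)$ plane. But the core of the argument is missing. Your proposed closing mechanism --- homogeneous linear systems in $(\U_1,\U_2)$ or in $(\theta,\lambda^5(p+\mu)^{-1})$ with positive-definite coefficient matrices built from basic functions --- is the tool of Theorems \ref{th1} and \ref{th2} and of Lemma \ref{Theorem3}, and it does not survive into this case: with $\U_3=\Q_3=\R_3=\b_3=\E_{13}=\E_{23}=0$ the equations (\ref{m_eq1},\ref{m_eq2}) degenerate outright to $\B_1=\B_2=0$, and the surviving relations contain the acceleration only through the rotation-covariant combinations $\b_1\U_1+\b_2\U_2$, $\b_2\U_1-\b_1\U_2$ and $\U_1^2+\U_2^2$, so no linear system in $(\U_1,\U_2)$ with basic coefficients is available. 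You correctly identify the under-determination as the main obstacle, but ``bring in Appendix 2 and track low-order coefficients'' is a hope, not a strategy.

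What actually closes the argument is a specific elimination scheme. One introduces $\modU=\U_1^2+\U_2^2$ with evolution (\ref{e0U_a}), forms the combination (\ref{eq7bis}) of (\ref{m_eq3}) and (\ref{th5_extra1}) in which the acceleration enters only via $\b_2\U_1-\b_1\U_2$ and $\modU$, and propagates it repeatedly along $\w{u}$, using (\ref{eq7bis}) and $\partial_0(\ref{eq7bis})$ to eliminate the linear acceleration terms from the higher derivatives. This produces three polynomial relations in $\modU$, $\theta^2$ and $p+\mu$ with basic coefficients; eliminating $\theta$ and then $\modU$ by resultants leaves a single relation $\mathcal{F}(p+\mu,\lambda)=0$, which via (\ref{expl_eqstate}) becomes $\sum_{i,j}c_{i,j}\lambda^{ir+j}=0$ with basic $c_{i,j}$. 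The conclusion that $\lambda$ is basic (whence $\theta=0$) then follows not from Remark \ref{pol} but from Lemma 3 of \cite{Slobodeanu2014} on linear independence of real powers of a non-basic function: exponent collisions $ir+j=i_0r+j_0$ can occur only for finitely many rational $r$, each of which is checked and excluded by direct substitution, with the factors $p'=-\smfrac{1}{11}$ and $p'=\smfrac{1}{4}$ (where the elimination order changes) handled separately. This reduction to a two-variable polynomial and the exponent-collision analysis constitute the substance of the proof, and they are absent from your plan.
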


\begin{proof}
Assume that $\omega \theta \neq 0$. If a Killing vector along the vorticity exists, then $\dot{u}_3=0$ and we can impose all relations obtained in the proof of Theorem \ref{th2}:
\begin{equation}
\dot{u}_3= z_3 = q_3 = r_3= E_{13}=E_{23}=0,
\end{equation}
together with (\ref{th1_eq8bis}). Translating these in terms of basic variables, we obtain, besides $\dot{u}_3= z_3=0$
(i.e.~$\partial_3 \mu
=\partial_3 \theta=0$),
\begin{equation}
 \b_3=\Q_3=\R_3=\E_{13}=\E_{23}=0
\end{equation}
and two algebraic equations, namely (\ref{th1_eq8bis}) becoming (cf. also (42) in \cite{Slobodeanu2014})
\begin{eqnarray}\label{th5_extra1}
\fl \qquad (p'-1)(6p'+1)(\U_1^2+\U_2^2)-6(3p'+1)(\Q_1\U_1-\R_2 \U_2) \nonumber \\
\fl \qquad + 3(9p'-5) \frac{\lambda^8 \O^2}{(p+\mu)^2} + (3p'+1)(\theta^2-3\mu)\lambda^{-2} +6\E_3 -(3p'+1)\J = 0
\end{eqnarray}
and (\ref{basic_eq7}) simplifying to
\begin{equation}
 \B_6=\R_2\b_2-\Q_1\b_1.
\end{equation}
Under these restrictions the equations (\ref{m_eq11},\ref{m_eq9},\ref{m_eq10},\ref{m_eq7},\ref{m_eq8}) also tell us that
\begin{equation}
 \B_7=\B_8=\B_9=\B_{10}=\B_{11}=0.
\end{equation}
Furthermore, $\Q_3$ being 0, we have the freedom of performing an extra basic rotation in the $(1,2)$ plane, allowing us to
remove one of the basic isotropy-breaking variables, $\b_1, \b_2,\Q_1,\R_2,\partial_1 \J,\partial_2 \J,\E_{12}$ or $\E_0$. \\

We introduce now a new variable $\modU=\U_1^2+\U_2^2$, for which the time evolution can be written as
\begin{equation}\label{e0U_a}
 \partial_0 \modU=2 p'\theta \modU -\frac{3\lambda^5}{2(p+\mu)} (\b_1 \U_1+\b_2 \U_2).
\end{equation}
Our aim will be to construct a polynomial system with basic coefficients, in which the main variables are $\modU,\theta$ and $p+\mu$, while $\mu$ is given by (\ref{expl_eqstate}), namely
\begin{equation}\label{mu_expr}
 \mu = \frac{p+\mu}{p'+1}+\mu_0.
\end{equation}
In order to eliminate the variables $\U_1,\U_2$, we need the following equation (cf. also (43) in \cite{Slobodeanu2014}), obtained as linear combination of (\ref{m_eq3}) and (\ref{th5_extra1}),
\begin{eqnarray}
\fl \left( 4\,p'-1 \right)(\b_2\,  \U_1-\b_1\, \U_2)-\frac {42\,{p'}
^{3}+13\,{p'}^{2}-8\,p'+1} {3(3\,p'+1)}\O\modU
-\frac{\O}{3}(7\,p'-3)(\theta^2 -3 \mu)\lambda^{-2} \label{eq7bis}\\
\fl + 4\O(p+\mu)\lambda^{-2}
 - \frac{63\,{p'}^2+82\,p'-1 }{3\,p'+1} \frac{\lambda^{8}{\O}^{3}}{(p+\mu)^2} + \frac{21\,p'-1}{9}\O \, \J 
+ \frac{4(9\,p'-1)}{3(3\,p'+1)}\O \, \E_3 + \B_3=0. \nonumber
\end{eqnarray}
The subsequent time evolutions of this equation will be calculated using (\ref{utheta}--\ref{uU2}) and (\ref{e0mu},\ref{e0p}). 
The first element of this sequence, $\partial_0(\ref{eq7bis})$, is given by
(cf. also (44) in \cite{Slobodeanu2014})
\begin{eqnarray}
\fl \qquad (11 p'+ 1)(\b_1\,\U_1+\b_2\,\U_2)\frac{\lambda^5 \theta^{-1}}{p+\mu} 
+ \smfrac{4}{9} p' \left( 21 p'+11 \right) \modU \nonumber \\
\fl \qquad + \frac{8(3\,p'+1)}{9(3\,p' - 1)}\J
+ \frac {4\left( 9\,p'-1 \right)}{3\left(3\,p'-1 \right)}\E_3 
+ \frac {3\,p'+1}{3\,p'-1} \frac{\B_3}{\O}  \label{eq7bis0} \\
\fl \qquad + \smfrac{2}{3}\,\frac{\left( 21\,p'+1 \right)
 \left( 9\,p'-5 \right)  \left(p'+1 \right)}{p'\, \left(3\,p'-1 \right)} 
 \frac{ {\lambda}^{8}\O^{2}}{(p+\mu)^2} 
+ \smfrac{4}{3}\, \frac{\left(3\,p'+1 \right)\left(6\,p' + 1 \right)}{p'\left( 3\,p'-1 \right)}(p+\mu)\lambda^{-2}=0. \nonumber
\end{eqnarray}

First notice that, if $p'=-1/11$, $\partial_0$(\ref{eq7bis0}) and (\ref{e0U_a}) give rise to a new equation from which $\b_1 \U_1+\b_2 \U_2$ can
be calculated. The next derivative $\partial^2_0$(\ref{eq7bis0}) involves a term $\b_2 \U_1-\b_1 \U_2$, allowing one (as $\modU =
\U_1^2+\U_2^2\neq0$) to eliminate successively $\U_1,\U_2$, $\theta$ and $\modU$ from the sequence of derivatives $\partial_0^{(0)}$(\ref{eq7bis}), \ldots,
$\partial^{(4)}_0$(\ref{eq7bis}). Eventually, after substituting (\ref{expl_eqstate}), an equation in powers of $\lambda$ results,
with basic coefficients not all 0, showing that $\lambda$ is basic (details are available from the authors).

So henceforth we will assume $p'\neq -1/11$, allowing us to rewrite (\ref{e0U_a}) as
\begin{eqnarray}
\fl \partial_0 \modU =\frac{\theta}{(3p'-1)(11p'+1)} \left[\smfrac{4}{3}p'(3p'-1)(27p'+7) \modU + 2\frac{(3p'+1)(6p'+1)}{p'}(p+\mu)\lambda^{-2}\right.\nonumber \\
\fl \left. + \smfrac{4}{3}(3p'+1) \J +2(9p'-1)\E_3+\smfrac{3}{2}(3p'+1)\frac{\B_3}{\O}+\frac{(21p'+1)(9p'-5)(p'+1)}{p'}
\frac{\lambda^8 \O^2}{(p+\mu)^2}\right]. \label{e0U_b}
\end{eqnarray}

At this stage it becomes advantageous to apply a basic rotation such that, for example $\b_2=\b_1$.

In the following lines we only describe the outline
of the proof, as the output of the calculations is far too lengthy for publication. A Maple or Mathematica worksheet with all the details can be obtained from the authors.
First we should consider the special case
$p'=1/4$: the linear terms in $\U_1,\U_2$ are then absent from (\ref{eq7bis}), but reappear in
its evolution via (\ref{e0U_a}) and $\partial^{(2)}_0$(\ref{eq7bis0}). Similar to the case $p' = 1/11$ elimination of
$\U_1,\U_2$, $\theta$ and $\modU$ from the sequence of derivatives $\partial_0^{(0)}$(\ref{eq7bis}), \ldots ,
$\partial^{(4)}_0$(\ref{eq7bis}) leads then to $\lambda$ being basic, whence $\theta=0$.\\

When $p'\neq-1/11$ and $p'\neq1/4$, we first proceed as in \cite{Slobodeanu2014}: using (\ref{eq7bis},\ref{eq7bis0}) to eliminate the linear $\U_1,\U_2$ terms from
the sequence $\partial_0^{(2)}$(\ref{eq7bis}), \ldots, $\partial^{(4)}_0$(\ref{eq7bis}), we obtain three relations of the form
$\mathcal{P}_i=a_i \modU^2+b_i\modU \theta^2 +c_i \modU +d_i \theta^2 +e_i=0$ ($i=1,2, 3$), with $a_i,b_i,c_i,d_i,e_i$ polynomials having basic coefficients of degree 2 and 20 in respectively
$p+\mu$ and $\lambda$. Eliminating $\theta$ results
in two relations $\mathcal{R}_i(\modU,p+\mu, \lambda)=0$ ($i=1, 2$), with $\mathcal{R}_1, \mathcal{R}_2$ polynomials of third degree in
$\modU$ and having degrees $9$ and $30$ in respectively $p+\mu$ and $\lambda$. Their resultant $\mathcal{F}(p+\mu, \lambda)$ with respect to $\modU$ factorises as follows over $\mathbb{Q}$:
\begin{eqnarray}
\fl \mathcal{F} = \O^9 p'^{13}(p+\mu)^{18}\lambda^{18}(4p'-1)(6p'+1)(3p'+1)^2(3p'-1)^9\nonumber \\
\times (21p'+11)^4(11p'+1)^6(117 p'^2+69p'+2)
(96 p'^2+47 p'+1) \mathcal{F}_1 \mathcal{F}_2
\end{eqnarray}
with $\mathcal{F}_1$, $\mathcal{F}_2$ respectively of degrees (6,20) and (21,70) in $p+\mu$ and $\lambda$ and both having basic coefficients
depending on $\O, \J, \b_1, \E_3, \B_3$. Using (\ref{expl_eqstate}) any such polynomial in $p+\mu$ and $\lambda$ will be written as $\sum_{i,j} c_{i,j} \lambda^{i r+j}$ with $c_{i,j}$ basic functions.
The remaining part of the proof is based on Lemma 3 of \cite{Slobodeanu2014}, which essentially says that a finite sum $\sum_{i,j} c_{i,j} \lambda^{i r+j}$ of products of basic functions and real powers
of a (non-basic) function $\lambda$ can only be 0 if all coefficients vanish: if a `reference coefficient' $c_{i_0,j_0} \neq 0$ and if for all $(i,j) \neq (i_0,j_0)$ there are no cancellations corresponding to $i r + j = i_0 r + j_0$, then
$\lambda$ is basic. As cancellations can only occur for rational values of $r$, this implies a.o.~that for irrational $r$ all $c_{i,j}$ must be identically 0.\\

While the cases $p'\in\{-\smfrac{1}{3},-\smfrac{1}{11},0,\smfrac{1}{4},\smfrac{1}{3}\}$ have been dealt with before,
the special cases $6p'+1=0$, $117{p'}^2+69p'+2=0$, $96{p'}^2+47p'+1=0$ and $21p'+11=0$ correspond to the situation where
the degree w.r.t.~$\modU$ of $\mathcal{R}_1, \mathcal{R}_2$ decreases to 2 or 1 (for $21p'+11=0$). Calculating the resultant of
$\mathcal{R}_1, \mathcal{R}_2$, after simplifying first w.r.t.~the given $p'$ relations, results in $\lambda$ being a root of
a polynomial in some fractional power of $\lambda$ (and with basic coefficients not all being 0). It follows that $\lambda$ is basic, whence
$\theta=0$.\\

The case $\mathcal{F}_1=0$ is slightly more complicated: after substituting $p$ and $\mu$ as functions of $\lambda$ via (\ref{expl_eqstate}),
the occurring terms belong to the set $\{\lambda^{6r},\lambda^{5r},\lambda^{5r+2},\lambda^{4r+4},\lambda^{4r+2},
\lambda^{3r+10},\lambda^{2r+10},\lambda^{2r+12},\lambda^{20}\}$, with the coefficients $c_{0,20}$ and $c_{6,0}$ of $\lambda^{20}$ and
$\lambda^{6r}$ polynomials in $r$ having no common factor and the former being irreducible over $\mathbb{Q}$. The case $c_{0,20}=0$ hereby
being excluded, the case
$c_{0,20}\neq 0$ implies that the $\lambda^{20}$ term must cancel with one of the remaining terms in $\mathcal{F}_1$, leading to $r\in
\{\smfrac{9}{2},\smfrac{10}{3},\smfrac{18}{5},4,5\}$. While the cases $r\in\{\smfrac{10}{3},4\}$ ($p'\in\{\smfrac{1}{9},\smfrac{1}{3}\}$) have
been excluded before, the cases $r\in \{\smfrac{9}{2},\smfrac{18}{5},5\}$ can easily be excluded by direct substitution in $\mathcal{F}_1$ and
by verifying that the resulting
polynomial in (some rational power of) $\lambda$ is not identically 0.\\

The hardest case $\mathcal{F}_2=0$ can be dealt with in a similar way. First notice that the coefficients $c_{0,70}, c_{3,60}$ and $c_{21,0}$ of
$\mathcal{F}_2=\sum_{i,j} c_{i,j} \lambda^{i r+j}$ are polynomials in $r$ of degrees respectively 42, 45 and 57, with the rational roots 
belonging either to the set
$\{0,2,\smfrac{5}{2},\smfrac{8}{3},\smfrac{30}{11},3,\smfrac{10}{3},\smfrac{15}{4},4\}$
of already excluded $r$-values, or to the set $\{\smfrac{8}{3},\smfrac{20}{7},\smfrac{14}{3}\}$. Again by direct substitution in $\mathcal{F}_2$ 
it is easy to show that the latter three values of $r$ are excluded, while a simple
evaluation of resultants shows that $c_{0,70}, c_{3,60}$ and $c_{21,0}$ have no common irrational roots (besides those corresponding to the previously excluded case $117{p'}^2+69p'+2=0$).
It follows that each of the terms $\lambda^{70}, \lambda^{60+3 r}$ and
$\lambda^{21r}$ must cancel with one of the other terms in $\mathcal{F}_2$, yielding three large sets of $r$-values to be investigated.
However the intersection of the three sets only contains the excluded value $r=\smfrac{10}{3}$ and therefore
$\mathcal{F}_2=0$ implies that $\lambda$ is basic, whence $\theta=0$.
\end{proof}

In Section 6 of [30] a very similar proof to this final `elusive case' was given with the assumption that the cosmological constant vanishes and which does not cover the exceptional cases
$p' \in \{ -\frac{1}{6},-\frac{1}{11},-\frac{1}{21}, \frac{1}{4} \}$.
We notice that the system obtained by iterated propagation of (73) was there seen as a system in $\theta^2$ and $\partial_0 \theta$. The different choice of variables employed here allowed a unitary treatment of the
cases $p' \in \{ -\frac{1}{11},  \frac{1}{4} \}$, while $p' = -\frac{1}{6}$ has been easily eliminated and $p' = -\frac{1}{21}$ no longer occurs.

\begin{re}
One could wonder whether it is always possible to fix the tetrad -- as we did in the proof of Theorem \ref{Theorem5} -- such that all basic variables become invariants and hence such that, in the case of a Killing vector along the vorticity,
all the occurring $\partial_3$ derivatives become
0. It is easy to see, \emph{even for a non-linear equation of state}, that the exceptional situation, in which all the basic isotropy-breaking variables,
$\b_1=\b_2=\Q_1=\R_2=\E_{12}=\E_0=\partial_1 \J=\partial_2 \J$ vanish, is inconsistent:
(\ref{basic_eq27},\ref{basic_eq28}) imply then $\B_{12}=\B_{13}=0$, turning (\ref{m_eq12},\ref{m_eq13}) into a homogeneous system in $\U_1,\U_2$,
the determinant of which is positive definite (and hence the acceleration is parallel to the vorticity), unless $9 G p'-9 {p'}^2+1 = 0$ and
\begin{equation*}
 (p'+1)(81{p'}^2-5)\O^2\lambda^8-6{p'}^2(p+\mu)^2 \E_3 = 0.
\end{equation*}
Propagating this second equation along $\w{u}$ and simplifying the result by means of $9 G p'-9 {p'}^2+1 = 0$ leads then to a contradiction.

\end{re}

\section{Conclusion and discussion}

For shear-free perfect fluids obeying a barotropic equation of state (with $p+\mu\neq 0$) and obeying the Einstein field equations (with or without 
cosmological constant) we first have demonstrated two theorems, showing that
(Theorem \ref{th1}) $\omega \theta=0$ once $\dot{u}_1 / (\lambda p')$ and  $\dot{u}_2 / (\lambda p')$ are basic and (Theorem \ref{th2}) that either 
$\omega \theta=0$ or a Killing vector along the vorticity vector exists once $\dot{u}_3 / (\lambda p')$ is basic. In particular, Theorem \ref{th2} shows that (when $\omega \theta \neq 0$) the existence of a Killing vector along the vorticity is equivalent to the orthogonality of acceleration and vorticity. 
Next we have demonstrated (Theorem 3 and 4) that $\omega \theta=0$ once the equation of state is linear:
$p=(\gamma -1) \mu + p_0$, covering in the new proof all the exceptional cases of \cite{Slobodeanu2014} and generalising the result to the 
possible presence of a cosmological constant (absorbed in $p_0$). While doing so we
generalised the formalism of \cite{Slobodeanu2014} to general equations of state, hoping herewith (and with the aid of theorems \ref{th1} and \ref{th2}) to 
have provided the interested reader with a new technique to tackle the Shear-free Fluid Conjecture in its full generality. 

In section 5 we have demonstrated that the assumption of a linear equation of state together with $\omega \theta \neq 0$ implies $\Q_3=0$.
Lemma \ref{Theorem3} was then used to reduce the problem to the situation where a Killing vector exists along the vorticity. We are convinced that this lemma is also valid for a general barotropic equation of state (although we have not been able
to provide a detailed proof of this claim), and hence may play a key role in the general proof. The hardest part will then undoubtedly remain 
to prove the conjecture in the case where there is a Killing vector along the vorticity ...\\

The interested reader can obtain Maple or Mathematica worksheets with full details of all the proofs from the authors.

\section{Appendix 1}

The following is the initial set of equations describing a shear-free perfect fluid with a general barotropic equation of state $p=p(\mu)$, assuming throughout $\omega, p'$ and $3 p' +1 \neq 0$.\\

\textbf{a)} \  evolution equations ($A=1,2$, $\alpha=1,2,3$)
\begin{eqnarray}
\partial_{0}\mu &=& -(p+\mu) \theta, \quad \textrm{(conservation of mass)} \label{e0mu} \\
\partial_{0}p &=& -(p+\mu) p' \theta, \label{e0p} \\
\partial_{0}\theta&=&-\tfrac{1}{3}\theta^{2} + 2\omega^{2}-\tfrac{1}{2}(\mu+3p) + j, \
\textrm{(Raychaudhuri eq.)}  \label{Raych} \\
\partial_0 \dot{u}_{\alpha} &=& p' z_{\alpha} - G\theta \dot{u}_{\alpha}, \label{e0acc} \\
\partial_0 \omega &=&\tfrac{1}{3} \omega \theta(3p'-2), \label{e0om} \\
\partial_0 r_{\alpha}   &=& - \frac{1}{3} z_{\alpha} - \frac{\theta}{3} (\dot{u}_{\alpha} + r_{\alpha}), \label{e0r}\\
\partial_0 q_{\alpha} &=& \frac{1}{3} z_{\alpha} + \frac{\theta}{3} (\dot{u}_{\alpha}-q_{\alpha}), \label{e0q} \\
\partial_0 n &=& -\frac{\theta}{3} n, \label{e0n} \\
\partial_0 z_1 &=& \theta (p'-1) z_1-\tfrac{1}{2}\omega(9p' -1) z_2+ \tfrac{1}{2}\theta\omega(9G-2)\dot{u}_2, \label{e0z1}\\
\partial_0 z_2 &=&  \theta (p'-1) z_2+\tfrac{1}{2}\omega(9p' -1) z_1- \tfrac{1}{2}\theta\omega(9G-2)\dot{u}_1, \label{e0z2} \\
\partial_0 z_3 &=& \theta (p' -1) z_3, \label{e0z3} \\
\partial_0 j &=& \theta \left[G_p(p+\mu)- 2G + 1 \right]\dot{\w{u}}^2 -(2G-1)\dot{\w{u}}\cdot \w{z} \nonumber \\
&-& \theta\left[\tfrac{1}{3}(3G+1)j - p'(9p'-1)\omega^2\right], \label{e0jfirst}
\end{eqnarray}

$\diamondsuit$ the $\dot{\w{E}}$ second Bianchi identities,
\begin{eqnarray}
\partial_0 E_{AA} &=& \frac{1}{3(3p'+1)}[(18p'^2-3p'+G-1)\theta\omega^2-
(3G+9p'+1)\theta E_{AA} \nonumber\\
 & &  -2G(3\dot{u}_A  z_A -\dot{\w{u}} \cdot \w{z}) -(2G-(p+\mu)G_p)\theta
(3\dot{u}_A^{~2}-\dot{\w{u}}^2) ], \\
\partial_0 E_{33} &=& -\frac{1}{3(3p'+1)}[2(18p'^2-3p'+G-1)\theta\omega^2+
(3G+9p'+1)\theta E_{33} \nonumber\\
 & &  +2G(3\dot{u}_3  z_3 -\dot{\w{u}} \cdot \w{z}) +(2G-(p+\mu)G_p)\theta
(3\dot{u}_3^{~2}-\dot{\w{u}}^2) ],\label{e0E33} \\
\partial_0 E_{\alpha\beta} &=& -\frac{1}{3p'+1} [(G+3p'+\tfrac{1}{3})\theta
E_{\alpha\beta}+G (\dot{u}_\alpha z_\beta+\dot{u}_\beta z_\alpha) \nonumber\\
 & &  + (2G-(p+\mu)G_p) \theta\dot{u}_\alpha \dot{u}_\beta ],
\label{e0Ealfabeta}
\end{eqnarray}

\textbf{b)} \ spatial equations
\begin{eqnarray}
\partial_\alpha  p = - (p+\mu) \dot{u}_{\alpha}, \quad
 \textrm{(Euler equations)} \label{gradp} \\
\partial_\alpha \mu = -\frac{p+\mu}{p'} \dot{u}_{\alpha},  \quad \label{gradmu} \\
\partial_\alpha \theta = z_\alpha, \quad
(\textrm{definition of } \w{z}) \label{gradtheta}
\end{eqnarray}
\begin{eqnarray}
\partial_1 \omega = \tfrac{2}{3} z_2-\omega(q_1+2 \dot{u}_1), \quad & \big(\textrm{\small{(02)--Einstein field equation}}\big) \label{gradom1}\\
\partial_2 \omega = - \tfrac{2}{3} z_1+\omega (r_2-2\dot{u}_2),
\quad & \big(\textrm{\small{(01)--Einstein field equation}}\big)\label{gradom2}\\
\partial_3 \omega = \omega( \dot{u}_3+r_3-q_3), & \label{gradom3}
\end{eqnarray}
\begin{eqnarray}
\partial_{1}\dot{u}_{1} = \tfrac{1}{3}(j - \omega^2) - q_{2}\dot{u}_{2} + r_{3}\dot{u}_{3} -\dot{u}_1^2+E_{11}, & \big(\textrm{\small{(11)--Einstein field eq.}}\big) \label{e1u1}\\
\partial_{2}\dot{u}_{2} = \tfrac{1}{3}(j - \omega^2) - q_3\dot{u}_3 + r_1\dot{u}_1 - \dot{u}_2^2 + E_{22}, & \big(\textrm{\small{(22)--Einstein field eq.}}\big)\label{e2u2}\\
\partial_{3}\dot{u}_{3} = \tfrac{1}{3}(j + 2\omega^2) - q_1\dot{u}_1 + r_2 \dot{u}_2  - \dot{u}_3^2 + E_{33}, & \big(\textrm{\small{(33)--Einstein field eq.}}\big)\label{u33}\\
\partial_1 \dot{u}_2 = -p'\omega \theta + q_2 \dot{u}_1+\tfrac{1}{2} n_{33} \dot{u}_3-\dot{u}_1 \dot{u}_2+E_{12}, & \big(\textrm{\small{(12)--Einstein field eq.}}\big)\label{e1u2}\\
\partial_2 \dot{u}_1 = p'\omega \theta-r_1 \dot{u}_2-\tfrac{1}{2} n_{33} \dot{u}_3-\dot{u}_1 \dot{u}_2+E_{12}, & \big(\textrm{\small{(21)--Einstein field eq.}}\big)\label{e2u1}\\
\partial_1 \dot{u}_3 = -\tfrac{1}{2} n_{33} \dot{u}_2 - r_3 \dot{u}_1-\dot{u}_1 \dot{u}_3+E_{13},  & \big(\textrm{\small{(13)--Einstein field eq.}}\big) \label{e1u3} \\
\partial_2 \dot{u}_3 = \tfrac{1}{2}n_{33}\dot{u}_1 + q_3 \dot{u}_2-\dot{u}_2 \dot{u}_3+E_{23},  & \big(\textrm{\small{(23)--Einstein field eq.}}\big) \label{e2u3} \\
\partial_3 \dot{u}_1 = -(\tfrac{1}{2} n_{33} - n)\dot{u}_2 + q_1 \dot{u}_3 - \dot{u}_1 \dot{u}_3+E_{13}, & \big(\textrm{\small{(31)--Einstein field eq.}}\big) \label{e3u1} \\
\partial_3 \dot{u}_2 = (\tfrac{1}{2}n_{33} - n) \dot{u}_1 - r_2 \dot{u}_3 - \dot{u}_2 \dot{u}_3 + E_{23}, & \big(\textrm{\small{(32)--Einstein field eq.}}\big) \label{e3u2}
\end{eqnarray}
\begin{eqnarray}
\partial_1 j &=& p'\theta z_1-\frac{1}{6} \omega (27p'+13)z_2+\frac{1}{3} (18 \omega^2+\theta^2-3j-3\mu)\dot{u}_1 - \frac{p+\mu}{2p'}\dot{u}_1 \nonumber \\
&+&\frac{1}{2}\theta \omega(9G-2) \dot{u}_2 +4 \omega^2 q_1, \label{e1j}\\
\partial_2 j &=& p'\theta z_2+\frac{1}{6} \omega (27p'+13)z_1 + \frac{1}{3}(18 \omega^2+\theta^2-3j-3\mu)\dot{u}_2 - \frac{p+\mu}{2p'}\dot{u}_2\nonumber \\
&-&\frac{1}{2}\theta \omega(9G-2) \dot{u}_1 -4 \omega^2 r_2, \label{e2j}\\
\partial_3 j &=& p'\theta z_3+\frac{1}{3}(\theta^2-18 \omega^2-3j-3\mu)\dot{u}_3 -\frac{p+\mu}{2p'} \dot{u}_3-4 (r_3-q_3) \omega^2, \label{e3j}
\end{eqnarray}
\begin{eqnarray}
\partial_1 z_1  &=&
\frac{1}{1+3p'}[\theta(3G-2)E_{11}+(2G-(p+\mu)G_p)\theta(3\dot{u}_1^{~2}-\dot{\w{u}}^{2})\nonumber\\
& &
+2(3G-3p'-1)\dot{u}_1z_1-2 G\dot{\w{u}}\cdot \w{z} +\theta\omega^2 (9p'^2+6p'-G-1)] \nonumber \label{e1Z1} \\
& &  + r_3z_3 - q_2z_2 \label{Z11}, \\
\partial_2 z_2 &=&
\frac{1}{1+3p'}[\theta(3G-2)E_{22}+(2G-(p+\mu)G_p)\theta(3\dot{u}_2^{~2}-\dot{\w{u}}^2)\nonumber \\
& &
+2(3G-3p'-1)\dot{u}_2z_2-2 G\dot{\w{u}}\cdot \w{z}+\theta\omega^2(9p'^2+6p'-G-1)]\nonumber \label{e2Z2}\\
& & + r_1z_1-q_3z_3\label{Z22},\\
\partial_3 z_3
&=&\frac{1}{1+3p'}[\theta(3G-2)E_{33}+(2G-(p+\mu)G_p)\theta(3\dot{u}_3^{~2}-\dot{\w{u}}^{2})\nonumber \\
& &
+2(3G-3p'-1)\dot{u}_3z_3-2 G\dot{\w{u}}\cdot \w{z}+\theta\omega^2(9p'^2- 6p' + 2G + 1)]\nonumber \\
& & + r_2z_2 - q_1 z_1, \label{Z33}
\end{eqnarray}
\begin{eqnarray}
 \partial_1 z_2 &=& q_2z_1+\frac{n_{33}}{2}z_3+\frac{\omega}{6}(2\theta^2-12 \omega^2-6 j + 9 p+3 \mu) \nonumber \\
 &+& \frac{1}{1+3p'}[(3G-1-3p')(\dot{u}_2 z_1+ \dot{u}_1
z_2)+3(2G-(p+\mu)G_p)\theta\dot{u}_1 \dot{u}_2\nonumber \\
 & & +\theta (3G-2)E_{12}]\label{e1Z2},\\
 \partial_2 z_1 &=& -r_1 z_2-\frac{n_{33}}{2}-\frac{\omega}{6}(2\theta^2-12 \omega^2-6 j + 9p+3 \mu) \nonumber \\
& & + \frac{1}{1+3p'}[(3 G-1-3p')(\dot{u}_2 z_1+ \dot{u}_1
z_2)+3p'(2G-(p+\mu)G_p)\theta\dot{u}_1 \dot{u}_2\nonumber \\
& & +\theta (3G-2)E_{12}]\label{e2Z1},\\
 \partial_3 z_1 &=& q_1 z_3+\left(n-\frac{n_{33}}{2}\right)z_2
 +\frac{1}{1+3p'}[(3G-1-3p')(\dot{u}_3 z_1+ \dot{u}_1z_3) \nonumber \\
& & + 3\theta(2G-(p+\mu)G_p)\dot{u}_1\dot{u}_3+\theta(3G-2) E_{13}],\\
 \partial_3 z_2 &=& -r_2 z_3-\left(n-\frac{n_{33}}{2}\right)z_1
+\frac{1}{1+3p'}[(3G-1-3p')(\dot{u}_3 z_2+ \dot{u}_2z_3)\nonumber \\
& & + 3\theta(2G-(p+\mu)G_p)\dot{u}_2\dot{u}_3+\theta(3G-2) E_{23}] ,\\
 \partial_1 z_3 &=& -r_3 z_1-\frac{n_{33}}{2}z_2+\frac{1}{1+3p'}[(3G-1-3p')(\dot{u}_3 z_1+ \dot{u}_1z_3)\nonumber \\
& & + 3\theta(2G-(p+\mu)G_p)\dot{u}_1\dot{u}_3+\theta(3G-2)
E_{13}],\label{e1Z3} \\
 \partial_2 z_3 &=& q_3 z_2+\frac{n_{33}}{2}z_1 +\frac{1}{1+3p'}[(3G-1-3p')(\dot{u}_3 z_2+ \dot{u}_2z_3) \nonumber \\
& & + 3\theta(2G-(p+\mu)G_p)\dot{u}_2\dot{u}_3+\theta(3G-2)E_{23}]\label{e2Z3},
\end{eqnarray}

\medskip 

$\diamondsuit$ \ two equations obtained as linear combinations of the (12)--Einstein field equation and one of the Jacobi equations,
\begin{eqnarray}\label{e2q1}
\partial_2q_1+\smfrac{1}{2}\partial_3 n_{33}-r_2(r_1+q_1)-n(q_3+r_3)+n_{33}q_3-\smfrac{1}{3}\omega \theta +E_{12} =0,
\end{eqnarray}
\begin{eqnarray}\label{e1r2}
\partial_1r_2+\smfrac{1}{2}\partial_3 n_{33}+q_1(r_2+q_2)+n(q_3+r_3)-n_{33}r_3-\smfrac{1}{3}\omega \theta -E_{12} =0,
\end{eqnarray}

$\diamondsuit$ \ linear combinations of the (13)-- and (23)--Einstein field equations and the Jacobi equations,
\begin{eqnarray}
\partial_2 r_3 = -\smfrac{1}{2}\partial_1 n_{33}-q_1 n_{33}-q_2(q_3+r_3)+E_{23}, \label{e2r3} \\
%
\partial_1 q_3 = -\smfrac{1}{2}\partial_2 n_{33}+r_2 n_{33}+r_1(q_3+r_3)-E_{13}, \label{e1q3}
\end{eqnarray}

\textbf{c)} \ the (03)--Einstein field equation
\begin{equation}\label{defn33}
n_{33} = \frac{2}{3\omega} z_3, \quad
\end{equation}

\textbf{d)} \ remaining combinations of the Jacobi equations and the $(1,3)$, $(2,3)$ and $(\alpha,\alpha)$--Einstein field equations
\begin{eqnarray}
 \partial_1 n+\partial_3 q_2 = \smfrac{1}{2} \partial_1 n_{33} +n(r_1-q_1) +r_3(r_2+q_2)+q_1 n_{33}-E_{23} \label{cons1}, \\
 \partial_2 n+\partial_3 r_1 = \smfrac{1}{2} \partial_2 n_{33} +n(r_2-q_2) -q_3(r_1+q_1)-r_2 n_{33}+E_{13} \label{cons2}
\end{eqnarray}
and
\begin{eqnarray}
 \partial_3 q_3-\partial_2 r_2 &=& E_{11}-\frac{\mu}{3}+\frac{\theta^2}{9}+\frac{n_{33}^2}{4}-r_2^2-q_3^2 +r_1q_1\label{ein11bis}, \\
 \partial_1 q_1-\partial_3 r_3 &=& E_{22}-\frac{\mu}{3}+\frac{\theta^2}{9}+\frac{n_{33}^2}{4}-r_3^2-q_1^2 +r_2q_2\label{ein22bis}, \\
 \partial_2 q_2-\partial_1 r_1 &=& E_{33}-\frac{\mu}{3}+\frac{\theta^2}{9}-\frac{3 n_{33}^2}{4} + n \, n_{33}+3\omega^2-
 q_2^2-r_1^2 +r_3 q_3, \label{ein33bis}
\end{eqnarray}

\textbf{e)} the `$\dot{\w{H}}$' second Bianchi identities
\begin{eqnarray}
 \fl \partial_0 H_{11} + \partial_2 E_{13} - \partial_3 E_{12}&=&  E_{11} n -\left(n-\frac{n_{33}}{2}\right) E_{22} - \frac{1}{2} E_{33} n_{33}+(q_3+2\u_3)E_{12}\nonumber \\
 & & +(r_2-2\u_2) E_{13} -(r_1+q_1)E_{23}-\theta H_{11}-\omega H_{12}, \label{H11_0}\\
 \fl \partial_0 H_{22} + \partial_3 E_{12} - \partial_1 E_{23}&=&  E_{22} n - \left(n-\frac{n_{33}}{2}\right) E_{11}-\frac{1}{2} E_{33} n_{33}+(r_3-2\u_3)E_{12}\nonumber \\
 & & +(q_1+2\u_1) E_{23} -(r_2 + q_2)E_{13}-\theta H_{22}+\omega H_{12}, \label{H22_0}\\
 \fl \partial_0  H_{12}-\partial_3 E_{22}+\partial_2 E_{23} &=& (q_3+2\u_3)E_{22}-(q_3-\u_3)E_{33}+\left(2 n-\frac{n_{33}}{2}\right)E_{12}-\frac{p+\mu}{6 p^\prime} \u_3\nonumber \\
 & & +(r_1+\u_1)E_{13}+(2r_2-\u_2)E_{23}+(H_{11}-H_{33})\omega - H_{12} \theta, \label{H12_0}\\
 \fl \partial_0 H_{13} +\partial_2 E_{33}-\partial_3 E_{23} &=& (\u_2-2r_2)E_{22}-(r_2-2\u_2)E_{11}+(q_1-\u_1) E_{12}
 +\frac{p+\mu}{6p^\prime} \u_2\nonumber \\
 & & +(2 q_3+\u_3) E_{23}+\left(n+\frac{n_{33}}{2}\right) E_{13}-H_{13}\theta+H_{23} \omega, \label{H13_0}\\
  \fl \partial_0 H_{23} -\partial_1 E_{33}+\partial_3 E_{13} &=& -(\u_1+2q_1)E_{11}-(q_1+2\u_1)E_{22}+(r_2+\u_2) E_{12}-\frac{p+\mu}{6 p^\prime}\u_1\nonumber \\
 & & +(2 r_3-\u_3) E_{13}+\left(n+\frac{n_{33}}{2}\right) E_{23}-H_{23}\theta-H_{13} \omega, \label{H23_0}
\end{eqnarray}

\textbf{f)} the `$\w{\nabla \cdot H}$' second Bianchi equations (with $\w{\nabla \cdot H}_3$ becoming an identity under these two)
\begin{eqnarray}
\fl \partial_3 q_1+\partial_1 r_3 = -4 E_{13}+\frac{3G-2}{3(3p'+1)} \frac{\theta}{\omega} E_{23}+3(\u_1\u_3 +\u_1 r_3-\u_3 q_1)
+\frac{2G-G_p(p+\mu)}{3p'+1}\frac{\theta}{\omega}\u_2\u_3 \nonumber \\
+\frac{G z_3\u_2}{\omega(3p'+1)}+\frac{(G-3p'-1) z_2\u_3}{\omega(3p'+1)}-\frac{z_1 z_3}{9\omega^2}
+\frac{(q_3-4r_3)z_2}{3\omega}+\frac{r_2 z_3}{3\omega} \nonumber \\
+r_1 r_3+q_1 q_3+q_3 r_1-n r_2 \label{divH1},\\
\fl \partial_3 r_2+\partial_2 q_3 = 4 E_{23}+\frac{3G-2}{3(3p'+1)} \frac{\theta}{\omega} E_{13}-3(\u_2\u_3 -\u_2 q_3+\u_3 r_2)
+\frac{2G-G_p(p+\mu)}{3p'+1}\frac{\theta}{\omega}\u_1\u_3 \nonumber \\
+\frac{G z_3\u_1}{\omega(3p'+1)}+\frac{(G-3p'-1) z_1\u_3}{\omega(3p'+1)}+\frac{z_2 z_3}{9\omega^2}
-\frac{(r_3-4q_3)z_1}{3\omega}-\frac{q_1 z_3}{3\omega} \nonumber \\
-r_2 r_3-q_2 q_3-r_3 q_2+n q_1, \label{divH2}
\end{eqnarray}

$\diamondsuit$ \  `$\w{\nabla \cdot E}$' second Bianchi equations (taking into account (\ref{def_H}))
\begin{eqnarray}
\fl \partial_{\beta} {E^\beta}_1 +E_{11}(2 q_1 -r_1)+E_{12}(2 q_2-r_2)+E_{13}(q_3-2r_3)+E_{22}(r_1+q_1)+E_{23}(n_{33}-n) \nonumber \\
+\omega z_2-3 \omega^2 q_1+\frac{\mu+p}{3 p'} \u_1 =0, \label{divE1} \\
\fl \partial_{\beta} {E^\beta}_2 -E_{22}(2 r_2 -q_2)-E_{12}(2 r_1-q_1)-E_{23}(r_3-2q_3)-E_{11}(r_2+q_2)-E_{13}(n_{33}-n) \nonumber \\
-\omega z_1+3 \omega^2 r_2+\frac{\mu+p}{3 p'} \u_2 =0, \label{divE2} \\
\fl \partial_{\beta} {E^\beta}_3 +E_{13}(2 q_1 -r_1) - E_{23}(2 r_2 -q_2)+E_{33}(2q_3-r_3)+E_{11}(r_3+q_3) \nonumber \\
-3 \omega^2(q_3-r_3-2\u_3)+\frac{\mu+p}{3 p'} \u_3 =0. \label{divE3}
\end{eqnarray}

\section{Appendix 2}

Here we present the purely basic differential equations accompanying the algebraic
relations constructed in section 3.

The first set contains the definitions of the basic variables $\B_1, \ldots, \B_{16}$:
\begin{eqnarray}
\fl \X(\b_3) - \Z(\b_1)=(\R_3-\Q_3)\b_1-\N\b_2-\Q_1\b_3-\B_1, \label{basic_eq11} \\
\fl \Y(\b_3) - \Z(\b_2)=(\R_3+\Q_3)\b_2+\N\b_1+\R_2\b_3-\B_2, \label{basic_eq12} \\
\fl \X(\b_2) - \Y(\b_1) = \R_1\b_2+\Q_2\b_1-\frac{\b_3^2}{2\O}+\B_3, \label{basic_eq13} \\
\fl \X(\b_1) - \Y(\b_2)= -\R_1\b_1-\Q_2\b_2+2 \Q_3\b_3-\B_4, \label{basic_eq14} \\
\fl \X(\b_2) + \Y(\b_1)= \Q_2\b_1-\R_1\b_2+\B_5, \label{basic_eq15} \\
\fl \X(\b_1) + \Y(\b_2) = \R_1\b_1-\Q_2\b_2-2 \R_3\b_3+\B_6, \label{basic_eq16} \\
\fl \X(\b_3)= (\R_3-\Q_3)\b_1+\frac{\b_2\b_3}{4\O}+\B_7, \label{basic_eq17} \\
\fl \Y(\b_3)= (\R_3+\Q_3)\b_2-\frac{\b_1\b_3}{4\O}+\B_8, \label{basic_eq18} \\
\fl \Y(\E_{13}) - \Z(\E_{12}) = \left(\N+\frac{\b_3}{8 \O}\right)\E_0+\frac{3 \b_3}{8 \O} \E_3+(\Q_3+\R_3)\E_{12} \nonumber \\
 +\R_2\E_{13}-(\Q_1+\R_1)\E_{23}+\B_9, \label{basic_eq24} \\
 \fl \X(\E_{23}) - \Z(\E_{12}) = \left(\N+\frac{\b_3}{8 \O}\right)\E_0-\frac{3 \b_3}{8 \O} \E_3-(\Q_3-\R_3)\E_{12} \nonumber \\
 -\Q_1\E_{23}+(\Q_2+\R_2)\E_{13}-\B_{10}, \label{basic_eq25} \\
 %
%
\fl \Y(\E_3-\tfrac{1}{6}\J) - \Z(\E_{23}) =\half \R_2(\E_0+3\E_3) +\Q_1 \E_{12} \nonumber \\
+\left(\N-\frac{\b_3}{4\O}\right)\E_{13}+2 (\Q_3+\R_3) \E_{23}  + \B_{12},\label{basic_eq27} \\
\fl \X(\E_3-\tfrac{1}{6}\J) + \Z(\E_{13}) =-\half \Q_1(\E_0-3\E_3) +\R_2 \E_{12}\nonumber \\
+\left(\N-\frac{\b_3}{4\O}\right)\E_{23}+2 (\Q_3-\R_3) \E_{13} - \B_{13},\label{basic_eq28} \\
\fl \half \X(\E_3+\E_0) + \Y(\E_{12}) = \R_1\E_0-2\Q_2\E_{12}-(\Q_3+\R_3) \E_{13}+\frac{3\b_3}{4\O} \E_{23}- \B_{13}-\frac{\B_{14}}{6}, \label{basic_eq29} \\
\fl \half \Y(\E_3-\E_0) + \X(\E_{12}) = \Q_2\E_0+2\R_1\E_{12}+(\Q_3-\R_3) \E_{23}-\frac{3\b_3}{4\O} \E_{13}+\B_{12}-\frac{\B_{15}}{6}, \label{basic_eq30} \\
\fl \Z(\E_0-\E_3)-2 \X(\E_{13}) = (\Q_3-\R_3)(\E_0-3 \E_3)+\left(4\N+\frac{\b_3}{2\O}\right)\E_{12}+4\Q_1 \E_{13}+2\Q_2 \E_{23} \nonumber \\ 
+2 \B_{11}+\tfrac{1}{3}\B_{16}, \label{basic_eq31} \\
\fl \Z(\E_0+\E_3)+2 \Y(\E_{23}) = -(\Q_3+\R_3)(\E_0+3 \E_3)+\left(4\N+\frac{\b_3}{2\O}\right)\E_{12}+4\R_2 \E_{23}+2\R_1 \E_{13}\nonumber \\
+2 \B_{11} - \tfrac{1}{3}\B_{16}, \label{basic_eq26}
\end{eqnarray}

To this we add 
\begin{itemize}
\item[$\diamondsuit$] the integrability conditions of (\ref{XYZO}), namely (\ref{basic_eq1}+\ref{basic_eq6},\ref{basic_eq2},\ref{basic_eq3}), 
\item[$\diamondsuit$] the three
$(\alpha \alpha)$--Einstein field equations, namely (\ref{basic_eq8},\ref{basic_eq9},\ref{basic_eq10}),
\item[$\diamondsuit$] the four equations (\ref{e2q1},\ref{e1r2},\ref{cons1},\ref{cons2}), namely (\ref{basic_eq1},\ref{basic_eq6},\ref{basic_eq4},\ref{basic_eq5}),
\end{itemize}
all simplified with the relations obtained by acting with
the $\X,\Y,\Z$ operators on (\ref{convert_b1}--\ref{convert_b3}):
\begin{eqnarray}
\fl \X(\R_2)= \frac{1}{4\O} (4\R_3 \b_3-2\Q_3\b_3-\Q_1\b_1+\R_2\b_2-\B_6) -\Q_1(\R_2+\Q_2)-2\N\Q_3 +\frac{\E_{12}}{3}, \label{basic_eq1}\\
%
\fl \Y(\Q_1)= \frac{1}{4\O} (4\R_3 \b_3+2\Q_3\b_3-\Q_1\b_1+\R_2\b_2-\B_6) +\R_2(\R_1+\Q_1)+2\N\Q_3 -\frac{\E_{12}}{3}, \label{basic_eq6}\\
\fl 2\Y(\R_3)+\Z( \R_2)+\frac{1}{2\O} \Z(\b_1)= \N \left(\Q_1+\frac{\b_2}{2\O}\right)-\R_2(\Q_3-\R_3)-\frac{\b_1}{2\O} (\Q_3+3\R_3),\label{basic_eq2}\\
\fl 2 \X(\R_3)-\Z(\Q_1)-\frac{1}{2\O} \Z(\b_2) = \N \left(\R_2+\frac{\b_1}{2\O}\right)-\Q_1(\Q_3+\R_3)-\frac{\b_2}{2\O} (\Q_3-3\R_3),\label{basic_eq3}\\
\fl \X(\N)+\frac{1}{4\O} \X(\b_3)+\Z(\Q_2)= -\left(\N+\frac{3\b_3}{4\O}\right) \Q_1+(\Q_3-\R_3) \Q_2 +\R_2 \Q_3+\N \R_1-\R_2\R_3 \nonumber \\ -\frac{\b_2\b_3}{8\O^2}-\smfrac{1}{3}\E_{23} ,\label{basic_eq4}\\
\fl \Y(\N)+\frac{1}{4\O} \Y(\b_3)+\Z(\R_1)= \left(\N+\frac{3\b_3}{4\O}\right) \R_2-(\Q_3+\R_3) \R_1 +\R_2 \Q_3-\N \Q_2-\Q_1\Q_3 \nonumber \\ +\frac{\b_1\b_3}{8\O^2}+\smfrac{1}{3}\E_{13} ,\label{basic_eq5}\\
\fl \X(\R_1)-\Y(\Q_2)= \R_1^2+\Q_2^2+\R_3^2-\Q_3^2-\frac{\J}{9}-\frac{\E_3}{3}+\frac{\N\b_3}{2\O}+\smfrac{3}{16}\frac{\b_3}{\O^2} ,\label{basic_eq8} \\
\fl \Y(\R_2)-\Z(\Q_3+\R_3)= (\Q_3+\R_3)^2+\R_2^2-\Q_1\R_1-\frac{\J}{9}+\frac{\E_3-\E_0}{6}-\smfrac{1}{16}\frac{\b_3^2}{\O^2},\label{basic_eq9} \\
\fl \X(\Q_1)-\Z(\Q_3-\R_3)= -(\Q_3-\R_3)^2-\Q_1^2+\Q_2\R_2+\frac{\J}{9}-\frac{\E_3+\E_0}{6}+\smfrac{1}{16}\frac{\b_3^2}{\O^2},\label{basic_eq10} \\
%
\fl \Z(\b_3)=\R_2\b_2-\Q_1\b_1-\B_6,\label{basic_eq7}
\end{eqnarray}

$\diamondsuit$ \ the equations obtained by evaluation of $\X(\ref{convert_Q3R3}_a)$ and $\Y(\ref{convert_Q3R3}_b)$,
\begin{eqnarray}
\fl \X(\Q_3+\R_3) = 2\Q_3\R_1-\smfrac{3}{4}\frac{\b_3\R_2}{\O}+\smfrac{1}{4}\frac{\b_2(\Q_3+\R_3)}{\O}-\frac{\E_{13}}{3}+\frac{\B_8}{4 \O}-
\smfrac{3}{16}\frac{\b_1\b_3}{\O^2},\label{basic_eq19} \\
\fl \Y(\Q_3-\R_3) = -2\Q_3\Q_2+\smfrac{3}{4}\frac{\b_3\Q_1}{\O}+\smfrac{1}{4}\frac{\b_1(\R_3-\Q_3)}{\O}+\frac{\E_{23}}{3}+\frac{\B_7}{4 \O}+
\smfrac{3}{16}\frac{\b_2\b_3}{\O^2},\label{basic_eq20}
\end{eqnarray}

$\diamondsuit$ \ and the $\w{\nabla \cdot E}$ Bianchi equations\footnote{the $\w{\nabla}\cdot \w{H}$ equations are identities under the 
$\O$-integrability conditions, (\ref{basic_eq1}+\ref{basic_eq6},\ref{basic_eq2},\ref{basic_eq3})} (\ref{divE1},\ref{divE2},\ref{divE3}):
\begin{eqnarray}
\fl \smfrac{1}{2} \X(\E_0-\E_3+\smfrac{2}{3}\J)+\Y(\E_{12})+\Z(\E_{13})= (\R_1 -\smfrac{1}{2}\Q_1)\E_0+\smfrac{3}{2}\Q_1\E_3+\R_2 \E_{12}-
2 \Q_2\E_{12}\nonumber \\
+(\Q_3-3\R_3)\E_{13}+\left(\N +\frac{\b_3}{\O}\right)\E_{23} ,\label{basic_eq21} \\
\fl \smfrac{1}{2}\Y(\E_0+\E_3-\smfrac{2}{3}\J)-\X(\E_{12})-\Z(\E_{23})= (-\Q_2 +\smfrac{1}{2}\R_2)\E_0+\smfrac{3}{2}\R_2\E_3+\Q_1 \E_{12}-
2 \R_1\E_{12}\nonumber \\
+(\Q_3+3\R_3)\E_{23}+\left(\N +\frac{\b_3}{\O}\right)\E_{13} , \label{basic_eq22} \\
\fl \X(\E_{13})+\Y(\E_{23})+\Z(\E_3+\smfrac{1}{3}\J) = (\R_1-2\Q_1)\E_{13}+(2\R_2-\Q_2) \E_{23} \nonumber \\
-\Q_3\E_0-3\R_3\E_3 . \label{basic_eq23} 
\end{eqnarray}

\section*{Acknowledgement}

All calculations were performed using the Maple 2015 symbolic algebra package and checked with Mathematica 7.0.

\section*{References}

\end{document}